\newcommand{\bra}[1]{\langle#1|}
\newcommand{\ket}[1]{|#1\rangle}
\newcommand{\braket}[2]{\langle#1|#2\rangle}
\newcommand{\Tr}{\operatorname{Tr}}
\newcommand{\pure}[2]{\left(\begin{array}{c}#1\\#2\end{array}\right)}
\newtheorem{dfn}{Definition}
\newtheorem{thm}{Proposition}
\newtheorem{cor}{Corollary}
\newtheorem{algo}{Algorithm}
\begin{document}

\title{Informational power of quantum measurements}

\author{Michele \surname{Dall'Arno}}
\author{Giacomo Mauro \surname{D'Ariano}}
\affiliation{Quit group, Dipartimento di Fisica ``A. Volta'', via A. Bassi 6,
  I-27100 Pavia, Italy}
\affiliation{Istituto Nazionale di Fisica Nucleare, Gruppo IV, via A. Bassi 6,
  I-27100 Pavia, Italy}

\author{Massimiliano F. \surname{Sacchi}}
\affiliation{Quit group, Dipartimento di Fisica ``A. Volta'', via A. Bassi 6,
  I-27100 Pavia, Italy}
\affiliation{Istituto di Fotonica e Nanotecnologie (INF-CNR), Piazza Leonardo
  da Vinci 32, I-20133, Milano, Italy}
	
\date{\today}

\begin{abstract} 
  We introduce the informational power of a quantum measurement as the maximum
  amount of classical information that the measurement can extract from any
  ensemble of quantum states. We prove the additivity by showing that the
  informational power corresponds to the classical capacity of a
  quantum-classical channel. We restate the problem of evaluating the
  informational power as the maximization of the accessible information of a
  suitable ensemble. We provide a numerical algorithm to find an optimal
  ensemble, and quantify the informational power.
\end{abstract}

\maketitle

\section{Introduction}

The information stored in a quantum system is accessible only through a
quantum measurement, and the postulates of quantum theory severely limit what
a measurement can achieve. The problem of evaluating the informational power
of a quantum measurement - i. e. how much informative the measurement is - has
not been addressed yet in the literature, despite the obvious practical
relevance in several contexts, such as the communication of classical
information over noisy quantum channels, the storage and retrieval of
information from quantum memories \cite{NC00}, and the purification of noisy
quantum measurements \cite{DDS10}.

For which ensemble of states a given quantum measurement is more informative?
To answer such question, one can consider two different figures of merit: the
probability of correct detection (in a discrimination scenario) and the mutual
information (in a communication scenario). Up to now, the only case of
optimization of the input ensemble in the literature \cite{EE07} considers the
former as a figure of merit, benefiting of its linearity that simplifies
calculations, and working out an explicit form for the optimal states and the
corresponding detection probability. The latter case of optimization, namely
the maximization of the mutual information over input ensembles, is the aim of
this work. To this purpose, we define the informational power as the maximal
mutual information that a given quantum measurement is able to extract from an
ensemble of quantum states. We call the optimal ensemble maximally
informative.

The problem has analogies with those of quantifying classical capacity of
quantum channels and of attaining accessible information \cite{NC00}. In fact,
as we will show, the informational power of a quantum measurement is the
channel capacity of a quantum-classical channel \cite{Hol98}, and the
evaluation of the informational power is the dual of the problem of accessible
information, in a sense that we will clarify later.

The paper is organized as follows. In Sect. \ref{sec:infopower} we introduce
the informational power of quantum measurements. We show that it is the
classical capacity of a quantum-classical channel and prove additivity. We
restate the problem of maximizing the informational power of a measurement as
the problem of maximizing the accessible information of a suitable ensemble,
and provide a bound on the minimal number of states of a maximally informative
ensemble. In Sect. \ref{sec:evaluation}, we provide a numerical algorithm to
find a maximally informative ensemble for a given quantum measurement. In
Sect. \ref{sec:examples}, we classify some quantum measurements according to
their informational power, namely quantum measurements with commuting
elements, real-symmetric and mirror-symmetric quantum measurements, and the
$2$-dimensional symmetric informationally complete quantum measurement (i. e.,
the tetrahedral measurement). We summarize our results in
Sect. \ref{sec:conclusions}.

\section{Informational power of quantum measurements}\label{sec:infopower}

Let us recall some basic definitions \cite{Cov06} and set the notation. A {\em
  random variable} $X=\{p_i,X_i\}$ is a set of outcomes $\{i\}$ with values
$\{X_i\}$ and prior probabilities $\{p_i\}$. A joint random variable
$(X^1,\dots X^N)$ is defined analogously.

A measure of the uncertainty associated with a random variable $X$ is given by
the {\em Shannon entropy} $H(X)$
\begin{equation}\label{eq:shentr}
  H(X) := -\sum_i p_i\log p_i,
\end{equation}
where $\log x$ denotes the logarithm to the base $2$. A measure of the
remaining uncertainty of a random variable Y given that the value of $X$ is
known is provided by the {\em conditional entropy} $H(Y|X)$
\begin{equation}
  H(Y|X) := H(X,Y) - H(X).
\end{equation}
A measure of how much two random variables $X$ and $Y$ are correlated is given
by the {\em mutual information}
\begin{equation}\label{eq:minfo}
  I(X:Y) := H(X) + H(Y) - H(X,Y).
\end{equation}
The expected value of the mutual information of two random variables $X$ and
$Y$, given the value of a third $Z$, is the {\em conditional mutual
  information}
\begin{equation}
  I(X:Y|Z) := H(Y|Z) - H(Y|X,Z).
\end{equation}
Given a Markov chain $X \to Y \to Z$, i.e. a set of three random variables
$X$, $Y$, and $Z$, with $Z$ conditionally independent of $X$, one has the {\em
  data-processing inequality} $I(X:Y) \ge I(X:Z)$. In fact,
\begin{equation}\label{eq:dataproc}
  I(X:Z)=I(X:Y)-I(X:Y|Z),
\end{equation}
and $I(X:Y|Z) \ge 0$.

An {\em ensemble of quantum states} $R = \{p_i,\rho_i\}_{i=1}^M$ is
represented by a set of $M$ density matrices $\rho_i$ (positive semidefinite
unit-trace operators), each with a prior probability $p_i$. For ensembles of
pure states we replace the density matrices with the normalized states, and we
write $V=\{p_i,\ket{\psi_i}\}_{i=1}^M$. A quantum measurement is described by
a {\em positive operator-valued measurement} (POVM) $\Pi = \{\Pi_j\}_{j=1}^N$,
defined as a set of $N$ positive semidefinite operators $\Pi_j$ that sum to
identity, namely $\sum_{j=1}^N\Pi_j=\mathbb{1}$. If we consider an ensemble
$R=\{p_i,\rho_i\}$ and a POVM $\Pi=\{\Pi_j\}$, the conditional probability
$p_{j|i}$ of outcome $j$ given the state $\rho_i$ is given by the {\em
  Born rule}, i. e. $p_{j|i} = \Tr[\rho_i\Pi_j]$. In the case of a POVM $\Pi$
performed over an ensemble $R$, the mutual information is a measure of how
much the outcomes of the POVM $\Pi$ are correlated with the states $\rho_i$,
in fact
\begin{equation}\label{eq:qminfo}
  I(R,\Pi) := \sum_{i,j} p_i \Tr[\rho_i\Pi_j] \log
  \frac{\Tr[\rho_i\Pi_j]}{\sum_k p_k\Tr[\rho_k\Pi_j] }.
\end{equation}
Now we can introduce the informational power of a POVM, the quantity that we
will analyze in the rest of this work.
\begin{dfn}
  The informational power $W(\Pi)$ of a POVM $\Pi$ is the maximum over all
  possible ensembles of states $R$ of the mutual information between $\Pi$ and
  $R$
  \begin{equation}\label{eq:infopower}
    W(\Pi) = \max_{R} I(R,\Pi).
  \end{equation}
  We call any ensemble that maximizes the mutual information a
  maximally informative ensemble for $\Pi$.
\end{dfn}

\subsection{Informational power as a classical capacity}\label{sec:additivity}

Given the tensor product $\otimes_{n=1}^N \Pi^n =\{\otimes_{n=1}^N
\Pi_{j_n}^n\}$ describing the parallel use of $N$ POVMs, by using entangled
input states one may ask if the informational power is superadditive. We
recall that the analogous quantity in the problem of optimization of POVMs,
namely the accessible information, is additive \cite{DW04}.

According to \cite{Hol98} (see also \cite{Kin01, Sho02}) we provide the
following definitions.
\begin{dfn}\label{def:capacity}
  Given a channel $\Phi$ from an Hilbert space $\mathcal{H}$ to an Hilbert
  space $\mathcal{K}$, the {\em single-use channel capacity} is given by
  \begin{equation}\label{eq:capacity}
    C_1(\Phi) := \sup_R \sup_\Lambda I(\Phi(R),\Lambda),
  \end{equation}
  where the suprema are taken over all ensembles $R$ in
  $\mathcal{H}$ and over all POVMs $\Lambda$ on $\mathcal{K}$.
\end{dfn}
\begin{dfn}\label{def:qc_channel}
  A {\em quantum-classical channel} (q-c channel) $\Phi_\Pi$ is defined as
  \begin{equation}
    \Phi_\Pi(\rho) := \sum_j \Tr[\rho \Pi_j] \ket{j}\bra{j}.
  \end{equation}
  where $\Pi=\{\Pi_j\}$ is a POVM and $\ket{j}$ is an orthonormal basis.
\end{dfn}
A q-c channel $\Phi_\Pi$ is a decision rule that maps quantum states into
classical states via a measurement $\Pi$.

\begin{thm}\label{lmm:capacity}
  The informational power of a POVM $\Pi=\{\Pi_j\}$ is equal to the single-use
  capacity
  $C_1(\Phi_\Pi)$ of the q-c channel $\Phi_\Pi$, i. e.
  \begin{equation}
    C_1(\Phi_\Pi) = W(\Pi).
  \end{equation}
\end{thm}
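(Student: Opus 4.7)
The plan is to prove the two inequalities $C_1(\Phi_\Pi)\ge W(\Pi)$ and $C_1(\Phi_\Pi)\le W(\Pi)$ separately. Both reductions will exploit the fact that the output states $\Phi_\Pi(\rho)=\sum_j\Tr[\rho\Pi_j]\ket{j}\bra{j}$ are all diagonal (hence mutually commuting) in the fixed orthonormal basis $\{\ket{j}\}$, so the channel has effectively produced a classical register.

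For the bound $C_1(\Phi_\Pi)\ge W(\Pi)$, I would make the natural choice of the outer POVM $\Lambda=\{\ket{j}\bra{j}\}$ in Definition \ref{def:capacity}. Then the conditional probability of outcome $j$ given input $\rho_i$ becomes $\Tr[\Phi_\Pi(\rho_i)\ket{j}\bra{j}]=\Tr[\rho_i\Pi_j]$, the Born rule for $\Pi$. Hence, for every ensemble $R$,
\begin{equation}
I(\Phi_\Pi(R),\{\ket{j}\bra{j}\})=I(R,\Pi),
\end{equation}
and taking the supremum over $R$ gives $C_1(\Phi_\Pi)\ge W(\Pi)$ immediately.

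For the reverse bound $C_1(\Phi_\Pi)\le W(\Pi)$, I would fix an arbitrary ensemble $R=\{p_i,\rho_i\}$ and an arbitrary POVM $\Lambda=\{\Lambda_k\}$ on $\mathcal{K}$, and identify the Markov chain $X\to Y\to Z$ where $X$ is the ensemble label $i$, $Y$ is the $\Pi$ outcome $j$, and $Z$ is the $\Lambda$ outcome $k$. The required conditional independence holds because, given $Y=j$, the post-channel state is $\ket{j}\bra{j}$ independently of $i$, so $P(k|j,i)=\bra{j}\Lambda_k\ket{j}=P(k|j)$. Moreover the joint distribution of $(X,Z)$ reconstructed from this chain agrees with $p_{i,k}=p_i\Tr[\Phi_\Pi(\rho_i)\Lambda_k]$, so $I(\Phi_\Pi(R),\Lambda)=I(X:Z)$, while $I(R,\Pi)=I(X:Y)$. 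Applying the data-processing inequality \eqref{eq:dataproc} yields $I(\Phi_\Pi(R),\Lambda)\le I(R,\Pi)\le W(\Pi)$, and taking the suprema over $R$ and $\Lambda$ gives $C_1(\Phi_\Pi)\le W(\Pi)$.

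The only genuinely non-routine step is verifying the Markov property that legitimizes the data-processing argument; everything else is bookkeeping. I would make this explicit by writing out $p_{i,j,k}=p_i\,\Tr[\rho_i\Pi_j]\,\bra{j}\Lambda_k\ket{j}$ and noting that the $k$-factor does not depend on $i$. Combining the two inequalities yields the claimed equality.
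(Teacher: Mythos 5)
Your proof is correct and follows essentially the same route as the paper: the authors also introduce the Markov chain $X\to Y\to Z$ with $p_{j|i}=\Tr[\Pi_j\rho_i]$ and $q_{k|j}=\bra{j}\Lambda_k\ket{j}$, invoke the data-processing inequality for the upper bound, and use the projective measurement $\{\ket{k}\bra{k}\}$ to recover $I(R,\Pi)$ exactly. The only cosmetic difference is that you split the argument into two inequalities, whereas the paper phrases it as showing the supremum over $\Lambda$ is attained at $\{\ket{k}\bra{k}\}$.
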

\begin{proof}
  Consider an ensemble $R=\{p_i,\rho_i\}$ and a POVM $\Lambda=\{\Lambda_k\}$.
  Introduce the random variables $X$, $Y$, and $Z$. Take $X$ with prior
  probability $p_i$. Take $Y$ such that the conditional probability of outcome
  $j$ of $Y$ given outcome $i$ of $X$ is $p_{j|i}=\Tr[\Pi_j\rho_i]$. Take $Z$
  such that the conditional probability of outcome $k$ of $Z$ given outcome
  $j$ of $Y$ is $q_{k|j}=\bra{j}\Lambda_k\ket{j}$. Clearly, the joint
  probability of outcome $i$ and $k$ of $X$ and $Z$ respectively is given by
  $p_i\Tr[\Lambda_k\Phi_\Pi(\rho_i)]$, so $I(X:Z) = I(\Phi_\Pi(R),\Lambda)$,
  whereas $I(X:Y) = I(R,\Pi)$.

  Notice that $X \to Y \to Z$ is a Markov chain, so Eq. (\ref{eq:dataproc})
  holds. By choosing $\Lambda_k=\ket{k}\bra{k}$, one has
  $q_{k|j}=\delta_{j,k}$, so $H(Y|Z)=0$, and $I(X:Y|Z) = H(Y|Z) - H(Y|X,Z) =
  0$ for any $\{p_i\}$. Thus,
  \begin{equation}
    \sup_\Lambda I(\Phi_\Pi(R),\Lambda) = I(\Phi_\Pi(R), \{\ket{k}\bra{k}\}).
  \end{equation}
  Since $p_i\bra{k}\Phi_\Pi(\rho_i)\ket{k}=p_i\Tr[\rho_i\Pi_k]$, we have
  \begin{equation}
    C_1(\Phi_\Pi) = \sup_R I(\Phi_\Pi(R),\{\ket{k}\bra{k}\}) = \sup_R I(R,\Pi) = W(\Pi).
  \end{equation}
\end{proof}

\begin{thm}\label{thm:additivity}
  The informational power $W(\Pi)$ is an additive quantity, i.e.
  \begin{equation}
    W(\otimes_{n=1}^N \Pi^n) = \sum_{n=1}^N W(\Pi^n).
  \end{equation}
\end{thm}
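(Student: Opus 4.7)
The plan is to leverage Proposition \ref{lmm:capacity}, which identifies $W(\Pi)$ with the single-use capacity $C_1(\Phi_\Pi)$ of the q-c channel $\Phi_\Pi$, and then reduce the statement to the known additivity of the classical capacity for entanglement-breaking channels, as established by Shor \cite{Sho02} (building on \cite{Kin01, Hol98}).

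First, I would verify the factorization
\begin{equation*}
  \Phi_{\otimes_{n=1}^N \Pi^n} = \otimes_{n=1}^N \Phi_{\Pi^n},
\end{equation*}
which follows directly from Definition \ref{def:qc_channel} once one uses $\Tr[(\rho_1\otimes\rho_2)(\Pi_{j_1}^1\otimes\Pi_{j_2}^2)] = \Tr[\rho_1\Pi_{j_1}^1]\,\Tr[\rho_2\Pi_{j_2}^2]$, the fact that $\ket{j_1,j_2}\bra{j_1,j_2}=\ket{j_1}\bra{j_1}\otimes\ket{j_2}\bra{j_2}$, and linearity.

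Second, I would note that every q-c channel is entanglement-breaking: the output always lies in the commutative algebra spanned by the fixed orthonormal basis $\{\ket{j}\bra{j}\}$, so the Choi state is separable. Invoking Shor's additivity theorem \cite{Sho02} for entanglement-breaking channels, the Holevo capacity $\chi^*$ satisfies $\chi^*(\otimes_n \Phi_{\Pi^n}) = \sum_n \chi^*(\Phi_{\Pi^n})$. Combined with step one and Proposition \ref{lmm:capacity}, this yields the claim, since $C_1(\Phi_{\otimes_n \Pi^n}) = C_1(\otimes_n \Phi_{\Pi^n})$.

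The main point requiring care is the identification of the paper's $C_1$, defined through the accessible information $\sup_\Lambda I(\Phi(R),\Lambda)$, with the Holevo capacity $\chi^*$ for which Shor's additivity is stated. This identification does hold for q-c channels and their tensor products, because in that case all output states are simultaneously diagonal in a fixed product basis and hence commute; for ensembles of commuting states the accessible information equals the Holevo quantity, attained by the measurement in that common basis — exactly the choice $\Lambda_k=\ket{k}\bra{k}$ used in the proof of Proposition \ref{lmm:capacity}. Once this is noted, the rest is bookkeeping, and the superadditive entangled-input strategies that cause difficulties for generic channels are ruled out by the entanglement-breaking property.
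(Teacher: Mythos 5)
Your proposal is correct and follows essentially the same route as the paper: factor the q-c channel of the tensor-product POVM as a tensor product of q-c channels, apply Proposition \ref{lmm:capacity}, and invoke the known additivity of the classical capacity for q-c (entanglement-breaking) channels \cite{Hol98,Sho02}. Your explicit justification that the paper's $C_1$ (defined via accessible information) coincides with the Holevo capacity $\chi^*$ for q-c channels --- because all outputs commute in the fixed product basis --- is a point the paper leaves implicit, and it is a worthwhile clarification rather than a deviation in approach.
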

\begin{proof}
  Since the tensor product of q-c channels is a q-c channel,
  i. e. $\otimes_{n=1}^N\Phi_{\Pi^n} = \Phi_{\otimes_{n=1}^N\Pi^n}$, the
  statement follows immediately from Prop. \ref{lmm:capacity} and from the
  additivity property of the capacity for q-c channels \cite{Hol98,Sho02}.
\end{proof}

\subsection{Duality between informational power and accessible information}\label{sec:duality}

According to \cite{Hol73}, we provide the following definition.
\begin{dfn}
  The accessible information $A(R)$ of an ensemble $R=\{p_i,\rho_i\}$ is the
  maximum over all possible POVMs $\Pi$ of the mutual information between $R$
  and $\Pi$, namely
  \begin{equation}
    A(R) = \max_{\Pi} I(R,\Pi).
  \end{equation}
  We call any POVM that maximizes the mutual information a maximally
  informative POVM for $R$.
\end{dfn}

The accessible information of the ensemble $R=\{p_i,\rho_i\}$ is upper bounded
by the {\em Holevo quantity} \cite{Hol73},
\begin{equation}\label{eq:holevo}
  A(R) \le \chi(R) := S(\rho_R) - \sum_ip_iS(\rho_i),
\end{equation}
where $S(\rho) := -\Tr[\rho\log\rho]$ is the {\em von Neumann entropy} and
$\rho_R=\sum_i p_i \rho_i$. On the other hand, one has the following lower
bound \cite{JRW94}
\begin{equation}
  A(R) \ge Q(\rho_R) - \sum_ip_iQ(\rho_i),
\end{equation}
where $Q(\rho) := -\sum_k \left( \prod_{l \neq k} \frac{\lambda_k}{\lambda_k -
  \lambda_l} \right) \lambda_k \log \lambda_k, $ is the {\em subentropy} of a
quantum state, $\{\lambda_k\}$ being the set of eigenvalues of $\rho$.

Since invertible density matrices are a dense subset, in the following we
assume $\rho$ invertible. Given the ensemble $S=\{q_i,\sigma_i\}$, we call
$\sigma_S=\sum_iq_i\sigma_i$.
\begin{dfn}\label{def:dualens}
  Given an ensemble $S=\{q_i,\sigma_i\}$, we define the POVM $\Pi(S)$ as
  \begin{equation}\label{eq:dualens}
    \Pi(S):= \left\{q_i\sigma_S^{-1/2}\sigma_i\sigma_S^{-1/2}\right\}.
  \end{equation}
\end{dfn}
\begin{dfn}\label{def:dualpovm}
  Given a POVM $\Lambda=\{\Lambda_j\}$ and a density matrix $\sigma$, we define the
  ensemble $R(\Lambda,\sigma)$ as
  \begin{equation}\label{eq:dualpovm}
    R(\Lambda,\sigma) :=
    \left\{\Tr[\sigma\Lambda_j],
    \frac{\sigma^{1/2}\Lambda_j\sigma^{1/2}}{\Tr[\sigma\Lambda_j]}\right\}.
  \end{equation}
\end{dfn}
Definition \ref{def:dualens} corresponds to the so called ``pretty good''
measurement \cite{Bel75,HW94}. The ensemble-measurement duality given by the
definitions above has been exploited in \cite{Hal97} to obtain
measurement-dependent lower and upper bounds on $A(R(\Lambda ,\sigma))$. The
accessible information of the ensemble $R(\Lambda ,\sigma)$ has been studied
also in \cite{BH09}, in the context of quantifying the information-disturbance
tradeoff of quantum measurements.

In the following we show that there exists a duality between the informational
power and the accessible information that allows us to recast many results
from the latter context to the former one. Notice that $R(\Pi(S),\sigma_S) =
S$ and analogously $\Pi(R(\Lambda,\sigma)) = \Lambda$. Moreover, for any
ensemble $S$ and POVM $\Lambda$ one has
\begin{equation}\label{eq:dualmi}
  I(S,\Lambda) = I(R(\Lambda,\sigma_S),\Pi(S)).
\end{equation}

\begin{thm}\label{thm:duality}
  The informational power of a POVM $\Lambda=\{\Lambda_j\}$ is given by
  \begin{equation}
    W(\Lambda) = \max_\sigma A(R(\Lambda,\sigma)).
  \end{equation}
  The ensemble $S^*=\{q_i^*,\sigma_i^*\}$ is maximally informative for the
  POVM $\Lambda$ if and only if $\sigma_{S^*}=\arg\max_\sigma
  A(R(\Lambda,\sigma))$ and the POVM $\Pi(S^*)$ is maximally informative for
  the ensemble $R(\Lambda,\sigma_{S^*})$.
\end{thm}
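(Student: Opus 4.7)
The plan is to reduce the maximization defining $W(\Lambda)$ to a nested maximization, in which the inner layer is an accessible-information problem on $R(\Lambda,\sigma)$ and the outer layer is a maximization over the mean state $\sigma$.

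First I would start from $W(\Lambda)=\max_S I(S,\Lambda)$ and apply the mutual-information duality in Eq.~(\ref{eq:dualmi}) to rewrite
\begin{equation}
W(\Lambda) = \max_S I(R(\Lambda,\sigma_S),\Pi(S)).
\end{equation}
Then I would split the maximization over ensembles $S$ according to their mean state $\sigma_S$: for each (invertible) density matrix $\sigma$, let $\mathcal{E}_\sigma$ denote the set of ensembles $S$ with $\sigma_S=\sigma$, so
\begin{equation}
W(\Lambda) = \max_\sigma \max_{S\in\mathcal{E}_\sigma} I(R(\Lambda,\sigma),\Pi(S)).
\end{equation}

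The key step is to show that, for fixed $\sigma$, the map $S\mapsto\Pi(S)$ is a bijection from $\mathcal{E}_\sigma$ onto the set of POVMs on $\mathcal{H}$. For surjectivity, given any POVM $\Pi'$ I set $S=R(\Pi',\sigma)$; a direct calculation gives $\sigma_S=\sum_j\sigma^{1/2}\Pi'_j\sigma^{1/2}=\sigma$, so $S\in\mathcal{E}_\sigma$, and then $\Pi(S)=\Pi'$ by the identity $\Pi(R(\Pi',\sigma))=\Pi'$ already recorded in the text. Injectivity (and the inverse) follows from $R(\Pi(S),\sigma_S)=S$. With the bijection in hand,
\begin{equation}
\max_{S\in\mathcal{E}_\sigma} I(R(\Lambda,\sigma),\Pi(S)) = \max_{\Pi'} I(R(\Lambda,\sigma),\Pi') = A(R(\Lambda,\sigma)),
\end{equation}
and the outer maximum over $\sigma$ yields the claimed formula $W(\Lambda)=\max_\sigma A(R(\Lambda,\sigma))$.

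For the second statement I would simply trace the equalities backwards. An ensemble $S^*$ is maximally informative for $\Lambda$ precisely when $I(S^*,\Lambda)=W(\Lambda)$, which via Eq.~(\ref{eq:dualmi}) reads $I(R(\Lambda,\sigma_{S^*}),\Pi(S^*))=W(\Lambda)$. Under the nested-max decomposition above, this equality is attained if and only if $\sigma_{S^*}$ achieves $\max_\sigma A(R(\Lambda,\sigma))$ and, simultaneously, $\Pi(S^*)$ achieves $\max_{\Pi'} I(R(\Lambda,\sigma_{S^*}),\Pi')$, i.e.\ $\Pi(S^*)$ is maximally informative for $R(\Lambda,\sigma_{S^*})$. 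No genuinely hard step is anticipated, since Eq.~(\ref{eq:dualmi}) is taken as given; the only point requiring care is the bijection argument and the tacit invertibility of $\sigma_S$ needed to make $\Pi(S)$ well defined, which the paper absorbs by working on the dense subset of invertible states.
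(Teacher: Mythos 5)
Your proposal is correct and follows essentially the same route as the paper: the nested maximization $W(\Lambda)=\max_\sigma\max_{S|\sigma_S=\sigma}I(S,\Lambda)$ combined with Eq.~(\ref{eq:dualmi}) and the identification of $\{\Pi(S):\sigma_S=\sigma\}$ with the set of all POVMs. The only difference is that you spell out the bijection $S\mapsto\Pi(S)$ on $\mathcal{E}_\sigma$ (including the check $\sigma_{R(\Pi',\sigma)}=\sigma$), which the paper leaves implicit when it replaces $\max_{\Pi(S)|\sigma_S=\sigma}$ by $\max_\Pi$.
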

\begin{proof}
  From the definitions of informational power and accessible information, and
  from Eq. \ref{eq:dualmi}, one has
  \begin{equation}
    \begin{split}
      W(\Lambda) & =\max_\sigma\max_{S|\sigma_S=\sigma}I(S,\Lambda)\\ & =
      \max_\sigma\max_{\Pi(S)|\sigma_S=\sigma}I(R(\Lambda,\sigma_S),\Pi(S))\\
      & = \max_\sigma\max_\Pi I(R(\Lambda,\sigma),\Pi)\\
      & = \max_\sigma A(R(\Lambda,\sigma)).
    \end{split}
  \end{equation}
\end{proof}

Proposition \ref{thm:duality} makes clear the duality between the
informational power and the accessible information. A diagrammatic
representation of this duality is given by
\begin{equation*}
  \begin{CD}
    \Lambda @>\sigma_{S^*}>> R(\Lambda,\sigma_{S^*})\\ @VVV @VVV\\ S^*
    @<\sigma_{S^*}<< \Pi(S^*)
  \end{CD}
\end{equation*}
where $S^*=\arg \max_S I(S,\Lambda)$ and $\Pi(S^*)=\arg \max_\Pi
I(R(\Lambda,\sigma_{S^*}),\Pi)$. Horizontal arrows correspond to the duality
operation of Definitions \ref{def:dualens} and \ref{def:dualpovm}. Moving in
the sense of the arrow corresponds to apply Eq. (\ref{eq:dualpovm}), thus
requiring $\sigma_{S^*}$. Moving in the opposite sense corresponds to apply
Eq. (\ref{eq:dualens}). The vertical arrow from $\Lambda$ to $S^*$ indicates
that $S^*$ is maximally informative for the POVM $\Lambda$, whereas the
vertical arrow from $R(\Lambda,\sigma_{S^*})$ to $\Pi(S^*)$ indicates that
$\Pi(S^*)$ is maximally informative for the ensemble
$R(\Lambda,\sigma_{S^*})$.

From Prop. \ref{thm:duality} we can obtain a property of maximally informative
ensembles using Davies' theorem \cite{Dav78}.
\begin{thm}\label{thm:davies}
  Given a $D$-dimensional POVM $\Lambda=\{\Lambda_j\}$, there exists a
  maximally informative ensemble $S^*=\{q_i^*,\sigma_i^*\}_{i=1}^M$, with all
  $\sigma_i^*$ pure and $D\le M\le D^2$.
\end{thm}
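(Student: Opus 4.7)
The plan is to reduce the statement to Davies' theorem for the accessible information of an ensemble, using the ensemble-POVM duality of Proposition~\ref{thm:duality}. First, I would pick $\sigma^*$ attaining $\max_\sigma A(R(\Lambda,\sigma))$ (existence follows from compactness of the set of density matrices together with continuity of $A$). By Proposition~\ref{thm:duality}, an ensemble $S^*$ is maximally informative for $\Lambda$ if and only if $\sigma_{S^*}=\sigma^*$ and $\Pi(S^*)$ is maximally informative for the ensemble $R(\Lambda,\sigma^*)$, so it suffices to exhibit such an $S^*$ with the claimed structural properties.

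Second, I would apply Davies' theorem to the ensemble $R(\Lambda,\sigma^*)$ on the $D$-dimensional Hilbert space: there exists a maximally informative POVM $\Pi^*=\{\Pi_j^*\}_{j=1}^M$ whose elements are all rank-one and with $M\le D^2$. The lower bound $M\ge D$ is automatic, since $\sum_j \Pi_j^* = \mathbb{1}$ cannot be achieved by fewer than $D$ nonzero rank-one positive operators on a $D$-dimensional space.

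Third, I would invert the duality: define $S^*:=R(\Pi^*,\sigma^*)$, so that $\sigma_{S^*}=\sigma^*$ and $\Pi(S^*)=\Pi^*$ by the identity $R(\Pi(S),\sigma_S)=S$ noted just before Eq.~(\ref{eq:dualmi}). Proposition~\ref{thm:duality} then certifies that $S^*$ is maximally informative for $\Lambda$. Explicitly, $S^*=\{\Tr[\sigma^*\Pi_j^*],\,(\sigma^*)^{1/2}\Pi_j^*(\sigma^*)^{1/2}/\Tr[\sigma^*\Pi_j^*]\}_{j=1}^M$; since each $\Pi_j^*$ is rank-one and $\sigma^*$ is invertible, each $(\sigma^*)^{1/2}\Pi_j^*(\sigma^*)^{1/2}$ is rank-one, hence each $\sigma_i^*$ is pure. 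The cardinality is preserved, so $D\le M\le D^2$.

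The main obstacle I expect is the invertibility assumption on $\sigma^*$: the paper works with invertible $\sigma$ by density, but the argmax above could a priori be singular. One then either approximates $\sigma^*$ by a sequence of invertible states and passes to the limit using continuity of $W$ and $A$, or restricts the argument to $\supp\sigma^*$, where $\Lambda$ induces an effective POVM on a lower-dimensional space and the rank-one/pure correspondence of the duality is preserved. A secondary check is that Davies' theorem is applied to the generally mixed-state ensemble $R(\Lambda,\sigma^*)$, which is permitted since the statement is not restricted to ensembles of pure states.
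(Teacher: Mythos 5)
Your proposal is correct and follows essentially the same route as the paper: invoke Proposition~\ref{thm:duality} to reduce to the accessible information of $R(\Lambda,\sigma^*)$, apply Davies' theorem to get a rank-one maximally informative POVM with $D\le M\le D^2$, and pull back through the duality. You merely spell out details the paper leaves implicit, namely that rank-one POVM elements correspond to pure ensemble states under Eq.~(\ref{eq:dualpovm}) and that the possible singularity of $\sigma^*$ (which the paper sidesteps by its blanket invertibility assumption) needs the limiting or support-restriction argument you describe.
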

\begin{proof}
  By Prop. \ref{thm:duality}, $S^*$ is maximally informative for $\Lambda$ if
  and only if $\sigma_{S^*}=\arg\max_\sigma A(R(\Lambda,\sigma))$ and
  $\Pi(S^*)$ is maximally informative for $R(\Lambda, \sigma_{S^*})$. By
  Davies' theorem \cite{Dav78}, there exists a maximally informative POVM
  $\Pi(S^*)$ with $M$ rank-one elements and $D\le M \le D^2$, so the statement
  follows.
\end{proof}

For some classes of POVMs it is possible to improve the bound on the number of
elements of a maximally informative ensemble as follows.
\begin{dfn}
  An ensemble $S=\{q_i,\sigma_i\}$ on an Hilbert space $\mathcal{H}$ is real
  if there exists a basis on $\mathcal{H}$ relative to which all $\sigma_i$
  have real matrix elements.
\end{dfn}
\begin{dfn}
  A POVM $\Lambda=\{\Lambda_j\}$ on an Hilbert space $\mathcal{H}$ is real if
  there exists a basis on $\mathcal{H}$ relative to which all $\Lambda_j$ have
  real matrix elements.
\end{dfn}
\begin{thm}\label{thm:sasaki}
  Given a $D$-dimensional real POVM $\Lambda=\{\Lambda_j\}$, there exists a
  maximally informative real ensemble $S^*=\{q_i^*,\sigma_i^*\}_{i=1}^M$, with
  all $\sigma_i^*$ pure and $D\le M\le D(D+1)/2$.
\end{thm}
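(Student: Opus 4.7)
The plan is to reuse the duality of Prop.~\ref{thm:duality} in the real setting, exactly as Prop.~\ref{thm:davies} does in the Hermitian one. By Prop.~\ref{thm:duality} it suffices to exhibit a \emph{real} density matrix $\sigma^*$ attaining $\max_\sigma A(R(\Lambda,\sigma))$ together with a real, rank-one, maximally informative POVM $\Pi^*$ for the ensemble $R(\Lambda,\sigma^*)$; the desired ensemble is then $S^*:=R(\Pi^*,\sigma^*)$.

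The first step is to show that the outer maximisation in Prop.~\ref{thm:duality} can always be attained on a real density matrix. Fix a basis in which every $\Lambda_j$ is real, let $K$ denote complex conjugation in that basis, and, given any maximally informative ensemble $S=\{q_i,\sigma_i\}$ for $\Lambda$, replace each state by its real part $\sigma_i^r:=(\sigma_i+K\sigma_i K)/2$. Each $\sigma_i^r$ is again a density matrix and is manifestly real; moreover $K\Lambda_j K=\Lambda_j$ together with the reality of Born probabilities gives $\Tr[\sigma_i^r\Lambda_j]=\Tr[\sigma_i\Lambda_j]$, so both the joint distribution $q_i\Tr[\sigma_i\Lambda_j]$ and its marginals are left unchanged, and $I(S',\Lambda)=I(S,\Lambda)=W(\Lambda)$ for $S'=\{q_i,\sigma_i^r\}$. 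By Prop.~\ref{thm:duality}, the real density matrix $\sigma^*:=\sigma_{S'}$ therefore maximises $A(R(\Lambda,\sigma))$, and since $(\sigma^*)^{1/2}\Lambda_j(\sigma^*)^{1/2}$ is a product of real operators, $R(\Lambda,\sigma^*)$ is a real ensemble.

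The second step is the real analogue of Davies' theorem applied to $R(\Lambda,\sigma^*)$. Because the real symmetric operators on a $D$-dimensional real Hilbert space span a vector space of real dimension $D(D+1)/2$ (rather than $D^2$ as for Hermitians), the same Carath\'eodory-type dimension count used by Davies produces a maximally informative POVM $\Pi^*=\{\Pi^*_k\}_{k=1}^M$ for $R(\Lambda,\sigma^*)$ whose elements are real and rank-one and whose size satisfies $D\le M\le D(D+1)/2$. Running Prop.~\ref{thm:duality} backwards, $S^*:=R(\Pi^*,\sigma^*)$ is then maximally informative for $\Lambda$; its states $(\sigma^*)^{1/2}\Pi^*_k(\sigma^*)^{1/2}/\Tr[\sigma^*\Pi^*_k]$ are pure (since each $\Pi^*_k$ is rank-one) and real (since $\sigma^*$ and every $\Pi^*_k$ are real), with $M$ in the stated range.

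The main obstacle I anticipate is the first step: the real-part operation must preserve \emph{simultaneously} the positivity of each $\sigma_i$, the prior weights $q_i$, and every conditional probability $\Tr[\sigma_i\Lambda_j]$, so that the mutual information is exactly---not merely approximately---invariant. Once this is settled, the rest is a direct rerun of the Davies/Carath\'eodory argument with the ambient real dimension lowered from $D^2$ to $D(D+1)/2$.
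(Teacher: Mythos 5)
Your proof takes essentially the same route as the paper's: reduce via the duality of Prop.~\ref{thm:duality} to the accessible-information problem for the real ensemble $R(\Lambda,\sigma^*)$, then invoke the real analogue of Davies' theorem (Lemma~5 of Sasaki et al.\ \cite{SBJOH99}), which is precisely the Carath\'eodory dimension count over the $D(D+1)/2$-dimensional space of real symmetric operators that you sketch. The one genuine addition on your side is the explicit verification that the optimizing $\sigma^*$ may be taken real (by averaging each state with its complex conjugate, which preserves positivity, the priors, and all Born probabilities $\Tr[\sigma_i\Lambda_j]$); the paper leaves this step implicit when it applies the real-ensemble lemma to $R(\Lambda,\sigma_{S^*})$, so your argument is correct and slightly more complete.
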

\begin{proof}
  By Prop. \ref{thm:duality}, $S^*$ is maximally informative for $\Lambda$ if
  and only if $\sigma_{S^*}=\arg\max_\sigma A(R(\Lambda,\sigma))$ and
  $\Pi(S^*)$ is maximally informative for $R(\Lambda, \sigma_{S^*})$. By Lemma
  5 of \cite{SBJOH99}, there exists a maximally informative POVM $\Pi(S^*)$
  with $M$ rank-one elements and $D\le M \le D(D+1)/2$, so the statement
  follows.
\end{proof}

\section{Evaluation of the informational power}\label{sec:evaluation}

Given a POVM, it is in general an hard task to provide an explicit form for
the maximally informative ensemble, due to the non-linearity of the mutual
information as a figure of merit. In the following, we prove some necessary
conditions for attaining informational power, and we make use of these results
to provide an iterative algorithm converging to the maximally informative
ensemble. In this section it is convenient to take the states of the ensemble
unnormalized, with the norm giving the prior probability of each
state. Therefore we will also use the notation for the ensemble $V :=
\{\ket{\psi_i}\}$, with prior probability $p_i=||\psi_i||^2$.

\subsection{Necessary conditions to attain informational power}\label{sec:conditions}

When one optimizes the informational power, considering only ensembles of pure
states is not restrictive, as shown in Prop. \ref{thm:davies}. We provide here
a short alternative proof of this fact, which is independent of Davies'
theorem \cite{Dav78}.

\begin{thm}\label{thm:pures}
  For any given POVM $\Pi=\{\Pi_j\}$, there exists a maximally informative
  ensemble made of pure states.
\end{thm}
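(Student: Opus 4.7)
The plan is to show that any ensemble can be \emph{refined} into an ensemble of pure states without decreasing the mutual information; any maximizer can then be replaced by a pure-state maximizer. The approach is essentially a concavity/data-processing argument, which indeed bypasses Davies' theorem.

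First, given an arbitrary ensemble $R=\{p_i,\rho_i\}$, I would spectrally decompose each state as $\rho_i=\sum_k \lambda_{i,k}\ket{\psi_{i,k}}\bra{\psi_{i,k}}$ and form the refined pure-state ensemble $R'=\{p_i\lambda_{i,k},\ket{\psi_{i,k}}\bra{\psi_{i,k}}\}_{i,k}$. The key observation is that the overall state is preserved, $\rho_{R'}=\sum_i p_i\rho_i=\rho_R$, so the marginal outcome distribution $q_j=\sum_i p_i\Tr[\rho_i\Pi_j]$ is identical for $R$ and $R'$. Writing $I(R,\Pi)=H(Y)-H(Y|X)$ with $X$ the input label and $Y$ the POVM outcome, only the conditional term can change under refinement.

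The second step is to show $H(Y|X)$ decreases (weakly) under refinement. For each $i$, the conditional distribution $\{p_{j|i}=\Tr[\rho_i\Pi_j]\}_j$ is the convex combination $\sum_k \lambda_{i,k}\Tr[\ket{\psi_{i,k}}\bra{\psi_{i,k}}\Pi_j]$ of the conditional distributions attached to its pure components; concavity of Shannon entropy gives $H(Y|X=i)\ge \sum_k\lambda_{i,k}H(Y|X=(i,k))$, and averaging with weights $p_i$ yields $H(Y|X)_R\ge H(Y|X)_{R'}$, hence $I(R',\Pi)\ge I(R,\Pi)$. Equivalently, this is the data-processing inequality (\ref{eq:dataproc}) applied to the Markov chain $(i,k)\to i\to Y$, where the first arrow coarse-grains the refined label.

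Finally, to upgrade this into an existence statement, I would note that $W(\Pi)$ is a genuine maximum: one may restrict attention to ensembles with a bounded number of pure states (the supremum over pure ensembles equals the supremum over all ensembles by the previous step, and a standard Carath\'eodory-type argument caps the number of components), which form a compact set on which $I(\,\cdot\,,\Pi)$ is continuous. Any optimizer $R^*$ can then be refined into a pure-state ensemble $R'^*$ with $I(R'^*,\Pi)\ge I(R^*,\Pi)=W(\Pi)$, so $R'^*$ is itself maximally informative. The only mildly delicate point is this last existence step; the concavity computation itself is routine, which is precisely what makes this proof an attractive alternative to the Davies route used for Prop.~\ref{thm:davies}.
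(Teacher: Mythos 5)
Your proposal is correct and follows essentially the same route as the paper: spectrally decompose each $\rho_i$ into pure components and show the refined ensemble has no smaller mutual information, where your concavity-of-entropy step $H(Y|X=i)\ge\sum_k\lambda_{i,k}H(Y|X=(i,k))$ is exactly the paper's ``conditioning reduces entropy'' inequality $I(X:Z)\le I(X,Y:Z)$ in different clothing. Your closing compactness/Carath\'eodory remark about attainment of the maximum is a careful touch that the paper elides, but it does not change the substance of the argument.
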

\begin{proof}
  Consider an ensemble $R=\{p_i,\rho_i\}$. Each of the states can be
  decomposed on the basis of its orthogonal eigenvectors as $\rho_i = \sum_k
  \ket{\psi_{ik}}\bra{\psi_{ik}}$, with $\sum_k ||\psi_{ik}||^2=1$, $\forall
  i$. Denote by $V=\{\ket{\psi_{ik}}\}$ the ensemble of such pure states.

  For three random variables $X$, $Y$, and $Z$, we have
  \begin{equation}
    \begin{split}
      I(X:Z) & = H(Z) - H(Z|X)\\
      & \le H(Z) - H(Z|X,Y) = I(X,Y:Z),
    \end{split}
  \end{equation}
  since conditioning reduces entropy.  We take $X$ distributed according to
  $p_i$. If we set the joint probability $p_{i,j}$ of outcome $i$ of $X$ and
  $j$ of $Z$ to be $p_{i,j}=p_i\Tr[\Pi_j\rho_i]$, we have $I(X:Z)=I(R,\Pi)$.
  If we set the joint probability $p_{i,k,j}$ of outcome $i$, $k$, $j$ of $X$,
  $Y$ and $Z$, respectively, to be
  $p_{i,k,j}=p_i\bra{\psi_{i,k}}\Pi_j\ket{\psi_{i,k}}$, we have
  $I(X,Y:Z)=I(V,\Pi)$, and hence $I(V,\Pi) \ge I(R,\Pi)$. Clearly, the maximum
  of $I(R,\Pi)$ over $R$ can be searched only among ensembles of pure states.
\end{proof}

Now we turn to the problem of finding necessary conditions for an ensemble of
pure states to be maximally informative for a given POVM $\Pi=\{\Pi_j\}$. For
any ensemble $V=\{\ket{\psi_i}\}$, by defining
\begin{equation}\label{eq:piprime}
  \Pi'_i := \sum_{j=1}^N \log \frac{\bra{\psi_i} \Pi_j
    \ket{\psi_i}}{||\psi_i||^2\sum_{k=1}^M \bra{\psi_k} \Pi_j
    \ket{\psi_k}} \Pi_j,
\end{equation}
we notice that the mutual information $I(V,\Pi)$ can be written as
$I(V,\Pi)=\sum_i\bra{\psi_i}\Pi'_i\ket{\psi_i}$.

\begin{thm}\label{thm:condition1}
  Given a POVM $\Pi=\{\Pi_j\}$, a necessary condition for the ensemble
  $V=\{\ket{\psi_i}\}_{i=1}^M$ to be maximally informative is that
  \begin{equation}\label{eq:condition1}
    \Pi'_i \ket{\psi_i}= I(V,\Pi) \ket{\psi_i} \qquad \forall i=1,\dots M,
  \end{equation}
  where $\Pi'_i$ is given in Eq. (\ref{eq:piprime}).
\end{thm}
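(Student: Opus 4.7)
The plan is to treat the problem as a constrained optimization over the unnormalized ensemble $V=\{\ket{\psi_i}\}_{i=1}^M$, with the only constraint being the total-probability one $\sum_i\braket{\psi_i}{\psi_i}=1$ (since $p_i=\|\psi_i\|^2$), and to derive Eq.~(\ref{eq:condition1}) as the first-order Lagrange stationarity condition obtained by varying the bras $\bra{\psi_\ell}$ independently of the kets. Restriction to pure states is legitimate by Prop.~\ref{thm:pures}, so no rank or positivity constraints are needed beyond normalization.

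First I would rewrite the mutual information by splitting the logarithm. Setting $a_{ij}:=\bra{\psi_i}\Pi_j\ket{\psi_i}$, $p_i:=\braket{\psi_i}{\psi_i}$, and $b_j:=\sum_k a_{kj}$, and using $\sum_i a_{ij}=b_j$, one obtains
\begin{equation*}
  I(V,\Pi)=\sum_{i,j}a_{ij}\log a_{ij}-\sum_i p_i\log p_i-\sum_j b_j\log b_j.
\end{equation*}
With a single Lagrange multiplier $\lambda$ for the constraint, I set $L:=I(V,\Pi)-\lambda(\sum_i\braket{\psi_i}{\psi_i}-1)$ and compute $\partial L/\partial\bra{\psi_\ell}$ using $\partial a_{ij}/\partial\bra{\psi_\ell}=\delta_{i\ell}\Pi_j\ket{\psi_\ell}$, $\partial p_i/\partial\bra{\psi_\ell}=\delta_{i\ell}\ket{\psi_\ell}$, and $\partial b_j/\partial\bra{\psi_\ell}=\Pi_j\ket{\psi_\ell}$.

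The main (and in fact only) delicate step is the cancellation of constants. Each term $x\log x$ produces a $\log x+1$ upon differentiation; the $+1$'s from the $a_{ij}$ and $b_j$ pieces cancel thanks to $\sum_j\Pi_j=\mathbb{1}$, leaving only a $-\ket{\psi_\ell}$ from the $p_\ell\log p_\ell$ term. The remaining log contributions reassemble using the identity $\sum_j\log(a_{\ell j}/b_j)\,\Pi_j=\Pi'_\ell+\log p_\ell\cdot\mathbb{1}$ (read directly off Eq.~(\ref{eq:piprime})), so the stationarity condition collapses to
\begin{equation*}
  \Pi'_\ell\ket{\psi_\ell}=(1+\lambda)\ket{\psi_\ell}.
\end{equation*}
Finally I would identify the multiplier by contracting with $\bra{\psi_\ell}$, summing over $\ell$, and using both the compact form $I(V,\Pi)=\sum_\ell\bra{\psi_\ell}\Pi'_\ell\ket{\psi_\ell}$ and the constraint $\sum_\ell\braket{\psi_\ell}{\psi_\ell}=1$; this gives $1+\lambda=I(V,\Pi)$ and hence Eq.~(\ref{eq:condition1}). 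The potential pitfall to guard against is that $\Pi'_i$ depends implicitly on every $\ket{\psi_k}$ through $b_j$, which a priori introduces cross terms in the gradient; the observation that these cross terms contribute precisely $-\sum_j(\log b_j+1)\Pi_j\ket{\psi_\ell}$ and combine cleanly with the diagonal $a_{\ell j}$ contribution via $\sum_j\Pi_j=\mathbb{1}$ is what makes the final expression so simple.
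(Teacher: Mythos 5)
Your proposal is correct and follows essentially the same route as the paper: a Lagrange-multiplier stationarity condition obtained by varying the bras, followed by contracting with $\bra{\psi_\ell}$ and summing to identify the multiplier with $I(V,\Pi)$. The only difference is that you spell out explicitly the cancellation of the $+1$ terms via $\sum_j\Pi_j=\mathbb{1}$ and the hidden dependence of $\Pi'_i$ on all the states, which the paper's derivative $\partial C/\partial\bra{\psi_i}=[\Pi'_i+(\lambda-1)\mathbb{1}]\ket{\psi_i}$ asserts without detail.
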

\begin{proof}
  Upon introducing a Lagrange multiplier $\lambda$ in order to constrain the
  normalization of the input ensemble, let us consider the expression
  \begin{equation}\label{eq:merit}
    C = \sum_{i=1}^M \bra{\psi_i} \Pi'_i \ket{\psi_i} +
    \lambda\left(\sum_{i=1}^M ||\psi_i||^2 - 1\right).
  \end{equation}
  By equating to zero the derivative of Eq. (\ref{eq:merit})
  with respect to each $\bra{\psi_i}$, we obtain $M$
  extremal equations which are necessary conditions for a maximally
  informative ensemble, namely
  \begin{equation}
    \frac{\partial C}{\partial \bra{\psi_i}} = [\Pi'_i +
    (\lambda-1)\mathbb{1}] \ket{\psi_i} = 0, \qquad \forall
    i = 1,\dots M.
  \end{equation}
  Upon redefining $\mu = 1 - \lambda$, we can rewrite the extremal equations
  as $\Pi'_i \ket{\psi_i} = \mu \ket{\psi_i}$. By multiplying both sides on
  the left by $\ket{\psi_i}$ and summing over $i$, we notice that
  $\mu=I(V,\Pi)$.
\end{proof}

\begin{cor}\label{thm:condition2}
  Given a POVM $\Pi=\{\Pi_j\}$, a necessary condition for
  $V=\{\ket{\psi_i}\}_{i=1}^M$ to be maximally informative is that
  \begin{equation}\label{eq:minfocond}
    I(V,\Pi) = \sqrt{\sum_{i=1}^M \bra{\psi_i} {\Pi'_i}^2 \ket{\psi_i}}.
  \end{equation}
\end{cor}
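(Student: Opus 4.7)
The plan is to deduce this directly from Proposition \ref{thm:condition1}, which already identifies each $\ket{\psi_i}$ as an eigenvector of $\Pi'_i$ with eigenvalue $I(V,\Pi)$. So the whole content of the corollary is to square that relation, sum over $i$, and use the normalization of the ensemble.

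Concretely, assuming $V=\{\ket{\psi_i}\}_{i=1}^M$ is maximally informative, I would start from
\begin{equation*}
\Pi'_i \ket{\psi_i} = I(V,\Pi)\,\ket{\psi_i}, \qquad i=1,\dots,M,
\end{equation*}
and apply $\Pi'_i$ a second time to obtain $({\Pi'_i})^2 \ket{\psi_i} = I(V,\Pi)^2 \ket{\psi_i}$. Taking the inner product with $\bra{\psi_i}$ gives $\bra{\psi_i}({\Pi'_i})^2\ket{\psi_i} = I(V,\Pi)^2 \, \|\psi_i\|^2$. Summing over $i$ and using the normalization condition $\sum_i \|\psi_i\|^2 = \sum_i p_i = 1$ (recall that in this section the states are unnormalized, with $p_i = \|\psi_i\|^2$) yields
\begin{equation*}
\sum_{i=1}^M \bra{\psi_i}({\Pi'_i})^2\ket{\psi_i} = I(V,\Pi)^2.
\end{equation*}
Taking the (nonnegative) square root, which is justified because the mutual information $I(V,\Pi)$ is by definition nonnegative, gives the claimed identity.

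There is essentially no hard step here; the one thing worth checking is the sign choice when extracting the square root, and the observation that the normalization that was enforced by the Lagrange multiplier in the proof of Proposition \ref{thm:condition1} is exactly what collapses the sum of $\|\psi_i\|^2$ to unity. Hence this is a short two-line corollary rather than an independent argument.
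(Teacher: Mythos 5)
Your proposal is correct and follows essentially the same route as the paper: the paper also multiplies Eq.~(\ref{eq:condition1}) on the left by its Hermitian adjoint (equivalent to your second application of the Hermitian operator $\Pi'_i$), sums over $i$, and uses $\sum_i\|\psi_i\|^2=1$ before taking the nonnegative square root. No gaps.
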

\begin{proof}
  The result follows immediately by multiplying Eq. (\ref{eq:condition1}) on
  the left by its Hermitian adjoint, and summing over $i$.
\end{proof}

\subsection{An iterative algorithm to maximize informational power}\label{sec:algorithm}

In the following we provide a steepest-ascent iterative algorithm which is
effective in finding a maximally informative ensemble for a given POVM. A
similar algorithm for the evaluation of the accessible information for a given
ensemble can be found in \cite{REK05}.
\begin{algo}\label{thm:iteration}
  The following steepest-ascent algorithm converges to a maximum of the
  informational power. For arbitrary ensemble
  $V^0=\{\ket{\psi_i^0}\}_{i=1}^M$, evaluate $V^n=\{\ket{\psi_i^n}\}_{i=1}^M$
  at any order $n$ by the following steps:
  \begin{enumerate}
  \item
    Given $V^n=\{\ket{\psi_i^n}\}_{i=1}^M$, evaluate
    ${\Pi'}^n=\{{\Pi'_i}^n\}_{i=1}^M$ according to
    \begin{equation}\label{eq:step1}
      {\Pi'_i}^n = \sum_{j=1}^N \log \frac{\bra{\psi_i^n} \Pi_j
        \ket{\psi_i^n}}{\sum_{k=1}^M \bra{\psi_k^n} \Pi_j \ket{\psi_k^n}}
      \Pi_j - \log ||\psi_i^n||^2 \mathbb{1}.
    \end{equation}
  \item
    Pick up a small enough positive $\alpha$ and evaluate
    \begin{equation}\label{eq:step2}
      \ket{\hat{\psi}_i^{n+1}} = [(1-\alpha)\mathbb{1} +
      \alpha {\Pi'_i}^n] \ket{\psi_i^n}.
    \end{equation}
  \item
    Obtain $V^{n+1}$ as
    \begin{equation}\label{eq:step3}
      \ket{\psi_i^{n+1}} = \frac{\ket{\hat{\psi}_i^{n+1}}} {\sqrt{\sum_{i=1}^M
          ||\hat{\psi}_i^{n+1}||^2}}.
    \end{equation}
  \end{enumerate}
\end{algo}
\begin{proof}
  Consider the POVM $\Pi=\{\Pi_j\}$ and an ensemble
  $V^n=\{\ket{\psi_i^n}\}_{i=1}^M$, so Eq. (\ref{eq:step1}) is just the
  definition given in (\ref{eq:piprime}).

  The algorithm we are considering is a steepest-ascent algorithm. We move the
  ensemble in the direction of the gradient of the mutual information, namely
  \begin{equation}
    \begin{split}
      \nabla I(V,\Pi) & = \left(\frac{\partial I}{\partial\bra{\psi_1}},\dots
      \frac{\partial I}{\partial\bra{\psi_M}} \right)\\
      & = \left((\Pi'_1-\mathbb{1})\ket{\psi_1},\dots
      (\Pi'_M-\mathbb{1})\ket{\psi_M}\right),
    \end{split}
  \end{equation}
  which ensures that we follow the greatest increase of the mutual
  information. 
  So, if we set the iteration to be
  \begin{equation}\label{eq:proto}
    \left(\ket{\hat\psi_1^{n+1}},\dots\ket{\hat\psi_M^{n+1}}\right) =
    (1-\alpha)\left(\ket{\psi_1^{n}},\dots\ket{\psi_M^{n}}\right) + \alpha
    \nabla I(\Pi,V^n),
  \end{equation}
  we obtain Eq. (\ref{eq:step2}).

  Then, Eq. (\ref{eq:step3}) is just the normalization of the updated ensemble
  $V^{n+1}$ in order to satisfy $\sum_{i=1}^M||\psi_i||^2=1$. By construction,
  one has $I(V^{n+1},\Pi) \ge I(V^n,\Pi)$.
\end{proof}

As for all steepest-ascent algorithm, there is no protection against the
possibility of convergence toward a local, rather than a global, maximum,
whence one should run the algorithm for different initial ensembles in order
to discriminate between local and global maxima.

Any ensemble can be used as a starting point, except for a subset
corresponding to minima of the mutual information (for example, all the
ensembles composed by a single quantum state). These minima are unstable fix
points of the iteration, so even small perturbations let the iteration
converge to some maximum. Due to Propositions \ref{thm:davies} and
\ref{thm:sasaki}, it is sufficient to consider ensembles with $D^2$ states for
a $D$-dimensional POVM, and with $D(D+1)/2$ states for a real POVM.

The parameter $\alpha$ controls the length of each iterative step, so for
$\alpha$ too large, an overshooting can occur. This can be kept under control
by evaluating the mutual information $I(V,\Pi)$ at the end of each step: if
$I(V,\Pi)$ decreases instead of increasing, we are warned that we have taken
$\alpha$ too large. An efficient evaluation of $I(V,\Pi)$ can be performed
through Corollary \ref{thm:condition2}.

\section{Classification of quantum measurements}\label{sec:examples}

The informational power introduces a complete ordering between POVMs. In the
following, we classify some POVMs according to their informational power. We
will consider POVMs with commuting elements (Sect. \ref{sec:commuting}),
real-symmetric POVMs (Sect. \ref{sec:realsymm}), mirror-symmetric POVMs
(Sect. \ref{sec:mirrorsymm}), and the $2$-dimensional symmetric
informationally complete POVM (Sect. \ref{sec:tetrahedron}),

\subsection{POVMs with commuting elements}\label{sec:commuting}

\begin{thm}\label{thm:commpovm}
  Given a $D$-dimensional POVM $\Pi=\{\Pi_j\}_{j=1}^N$ with commuting
  elements, there exists a maximally informative ensemble
  $V=\{p_i^*,\ket{i}\}_{i=1}^M$ of $M \le D$ states, where $\ket{i}$ denotes
  the common orthonormal eigenvectors of $\Pi$, and the prior probabilities
  $p_i^*$ maximize the mutual information
  \begin{equation}\label{eq:prior_minfo}
    W(\Pi) = \max_{p_i} \sum_{i,j} p_i \bra{i}\Pi_j\ket{i}
    \log\frac{ \bra{i}\Pi_j\ket{i} }{\sum_k p_k \bra{k}\Pi_j\ket{k}}.
  \end{equation}
\end{thm}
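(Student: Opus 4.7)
The plan is to reduce this quantum optimization to a purely classical channel-capacity problem by exploiting simultaneous diagonalization. Since $[\Pi_j,\Pi_{j'}]=0$ for all $j,j'$, the POVM elements share a common orthonormal eigenbasis $\{\ket{i}\}_{i=1}^D$, so I may write $\Pi_j=\sum_i \pi_{j|i}\ket{i}\bra{i}$ with $\pi_{j|i}=\bra{i}\Pi_j\ket{i}\ge 0$ and $\sum_j \pi_{j|i}=1$. By Prop. \ref{thm:pures}, I may restrict attention to pure-state ensembles $V=\{p_i,\ket{\psi_i}\}$.

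The core step is to show that the maximum is attained on an ensemble supported on $\{\ket{i}\}_{i=1}^D$. Given such a $V$, I would introduce the refinement
\begin{equation*}
V'=\{p_i q_{k|i},\ket{k}\}_{i,k}, \qquad q_{k|i}:=|\braket{k}{\psi_i}|^2,
\end{equation*}
and mimic exactly the data-processing argument used in the proof of Prop. \ref{thm:pures}. Let $X$ take value $i$ with probability $p_i$, let $Y$ satisfy $P(Y{=}k\mid X{=}i)=q_{k|i}$, and let $Z$ satisfy $P(Z{=}j\mid Y{=}k)=\pi_{j|k}$. Because $\Pi_j$ is diagonal in $\{\ket{k}\}$, one has $P(Z{=}j\mid X{=}i)=\sum_k q_{k|i}\pi_{j|k}=\bra{\psi_i}\Pi_j\ket{\psi_i}$, so $I(X:Z)=I(V,\Pi)$ while $I(X,Y:Z)=I(V',\Pi)$. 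Since conditioning reduces entropy, $I(V',\Pi)\ge I(V,\Pi)$, and the supremum can therefore be searched among ensembles of common eigenstates.

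Finally, I would merge copies of the same $\ket{k}$ by summing their priors; direct inspection of (\ref{eq:qminfo}) shows this leaves the mutual information unchanged, since the conditional distributions $\pi_{j|k}$ coincide. What remains is an ensemble $\{p_k^*,\ket{k}\}_{k=1}^M$ with at most $M\le D$ distinct states, on which $I(V,\Pi)$ reduces exactly to the classical mutual information appearing in (\ref{eq:prior_minfo}), whose maximum over $\{p_k\}$ is by definition $W(\Pi)$.

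I do not expect any real obstacle: the only steps deserving care are verifying that $V'$ is a legitimate ensemble (its weights $p_i q_{k|i}$ sum to one by the normalization of $\ket{\psi_i}$ and of $\{p_i\}$) and that the merging step preserves mutual information. The bound $M\le D$ is then automatic since the common eigenbasis has exactly $D$ vectors, which also sharpens the $M\le D^2$ bound of Prop. \ref{thm:davies} in this commuting case.
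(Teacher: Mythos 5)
Your proposal is correct and follows essentially the same route as the paper: both rest on the observation that $\Tr[\Pi_j\rho]$ depends only on the diagonal of $\rho$ in the common eigenbasis, combined with the refinement/conditioning-reduces-entropy argument of Prop.~\ref{thm:pures} to pass to an ensemble supported on the eigenvectors $\ket{i}$, after which only the priors remain to be optimized. The paper dephases an arbitrary ensemble first and then refines, whereas you purify first and then refine and merge; this is only a reordering, and your explicit treatment of the merging step is a minor (welcome) addition of detail.
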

\begin{proof}
  For any ensemble $R=\{p_i,\rho_i\}$, consider the diagonal ensemble
  $S=\{p_i,\sigma_i\}$, where $\sigma_i=\sum_k
  \bra{k}\rho_i\ket{k}\ket{k}\bra{k}$ with $\ket{k}$ denoting the common
  eigenvectors of $\Pi$.  Clearly, $\Tr[\Pi_j \sigma_i] = \Tr[\Pi_j \rho_i]$,
  whence $I(R,\Pi)=I(S,\Pi)$.  As in Prop. {\ref{thm:pures}}, it is sufficient
  to look for the maximum over the prior probabilities $p_i$, with fixed
  states $\ket{i}$.  Hence Eq. (\ref{eq:prior_minfo}) follows.
\end{proof}
We notice that $M \le D$ since some of the prior $p_i$ obtained by optimizing
Eq. (\ref{eq:prior_minfo}) can be zero. Equation (\ref{eq:prior_minfo}) is a
concave function of the prior probabilities, and a numerical algorithm for
performing the optimization is provided in \cite{Bla72}.

As an application, we consider the POVM
$\Pi^{(\eta)}=\{\Pi_j^{(\eta)}\}_{j=1}^D$ describing the projective
measurement over an orthonormal basis $\{\ket{j}\}$ in dimension $D$ affected
by isotropic noise, i. e.
\begin{equation}\label{eq:onbdep}
  \Pi_j^{(\eta)} = \eta\ket{j}\bra{j} + (1-\eta) \frac{\mathbb{1}}{D}, \qquad
  j=1,\dots D.
\end{equation}
When $\eta=1$, a maximally informative ensemble is clearly
$V=\{p_i,\ket{i}\}$, with $p_i=1/D$. For $\eta<1$, by
Prop. \ref{thm:commpovm}, the ensemble $V$ is maximally informative for
$\{p_i\}$ maximizing Eq. (\ref{eq:prior_minfo}). By Born rule, the conditional
probability $p_{j|i}$ of outcome $j$ given the state $\ket{i}$ is
$p_{j|i}=\eta\delta_{i,j}+\frac{1-\eta}{D}$. Consider two random variables $X$
and $Y$ with joint probability $p_{i,j}=p_ip_{j|i}$ and marginal probabilities
$p_i$ and $q_j=\sum_ip_ip_{j|i}$, respectively. Clearly, $I(X:Y) =
I(V,\Pi^{(\eta)})$. If $p_i=\frac1D$, then $q_j=\frac1D$, and the Shannon
entropy $H(Y)$ of $Y$ is obviously maximized, i. e. $H(Y) = \log D$.
Moreover, the conditional Shannon entropy $H(Y|X)$ is independent of $p_i$,
and in fact one has
\begin{equation}
  \begin{split}
  H(Y|X) = & -
  \left(\eta+\frac{1-\eta}{D}\right)\log\left(\eta+\frac{1-\eta}{D}\right)\\ &
  - (D-1)\frac{1-\eta}{D}\log\frac{1-\eta}{D}.
  \end{split}
\end{equation}
Since $I(X:Y)=H(Y)-H(Y|X)$, the maximum of the mutual information is attained
for $p_i=\frac1D$, and the informational power is
$W(\Pi^{(\eta)})=\log(D)-H(Y|X)$. As expected, the informational power is an
increasing function of $\eta$, and is plotted in Fig. \ref{fig:onbdep}, for
different values of $D$.
\begin{figure}[htb]
  \includegraphics{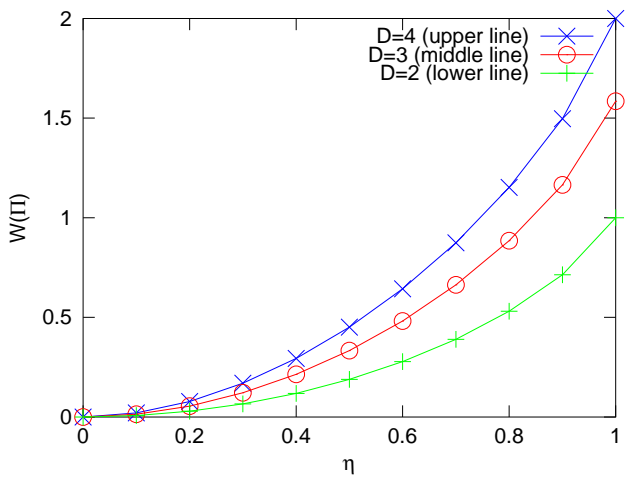}
  \caption{(Color online) Informational power $W(\Pi)$ of the $D$-dimensional
    POVM $\Pi^{(\eta)}$ projecting on the orthonormal basis affected by
    isotropic noise parameterized by $\eta$ [see Eq. (\ref{eq:onbdep})], as a
    function of $\eta$, for dimension $D=2,3,4$.}
  \label{fig:onbdep}
\end{figure}

This result can be useful to prove that the protocols proposed in \cite{DDS10}
for the purification of noisy quantum measurements are indeed optimal. The aim
of purification of noisy quantum measurements is to recast many uses of a
noisy POVM to a single use of an ideal POVM. More precisely, given an ensemble
$R$ and $N$ uses of a noisy POVM $\Pi$, one can ask what channel $\Phi$
maximizes the mutual information $I(\Phi(R),\Pi^{\otimes N})$. For example,
suppose that we have the ensemble $V=\{1/D,\ket{i}\}_{i=1}^D$ and $N$ uses of
the $D$-dimensional noisy POVM $\Pi^{(\eta)}$ as in
Eq. (\ref{eq:onbdep}). Since we have shown that the maximally informative
ensemble for $\Pi^{(\eta)}$ is $V$, by Prop. \ref{thm:additivity}, the channel
$\Phi$ that maximizes $I(\Phi(V),\Pi^{(\eta)\otimes N})$ is the orthogonal
cloning,
i. e. $\Phi(\rho)=\sum_{i=1}^{D}\bra{i}\rho\ket{i}(\ket{i}\bra{i})^{\otimes
  N}$.

\subsection{Real-symmetric POVMs}\label{sec:realsymm}

In the following we parameterize any pure state as
$\ket{\psi}=\pure{\cos\theta}{\sin\theta}$, in the basis of the eigenvectors
$\ket{0}$ and $\ket{1}$ of the Pauli matrix $\sigma_z$. We denote with $Z_N$
the group of rotations of $\pi/N$ around the $y$-axis, generated by $U =
\exp\left(-i\frac{\pi}{N}\sigma_y\right)$.
\begin{dfn}
  A $2$-dimensional real ensemble $V=\{p_i,\ket{\psi_i}\}_{i=0}^{M-1}$, with
  $\ket{\psi_i}=U^i\ket{\psi_0}$ for any fixed $\ket{\psi_0}$, is called
  real $Z_M$-symmetric.
\end{dfn}
\begin{dfn}
  A $2$-dimensional real POVM $\Pi=\{\Pi_j\}_{j=0}^{N-1}$, with $\Pi_j=\frac2N
  \ket{\pi_j}\bra{\pi_j}$ and $\ket{\pi_j}=U^j\ket{\pi_0}$ for any fixed
  $\ket{\pi_0}$, is called real $Z_N$-symmetric.
\end{dfn}
Without loss of generality, we will take $\ket{\pi_0}=\ket{0}$.

\begin{thm}\label{thm:iprs}
  For any real $Z_N$-symmetric POVM $\Pi=\{\frac2N
  \ket{\pi_j}\bra{\pi_j}\}_{j=0}^{N-1}$, the ensemble $V=\{p_i,
  \ket{\psi_i}\}_{i=0}^{M-1}$, with
  $\ket{\psi_i}=\pure{\sin{\theta_i}}{\cos{\theta_i}}$, is maximally
  informative if $M$, $\{\theta_i\}$ and $\{p_i\}$ are taken as either
  \begin{itemize}
    \item (real $Z_N$-symmetric) $M=N$, $\theta_i=\frac{\pi i}{N}$ and
      $p_i=\frac1N$,
    \item (real $Y$-shaped) $M=3$, $\theta_0=0$,
      $\theta_1=\frac{\pi n}{N}$, $\theta_2=-\frac{\pi n}{N}$, and
      $p_0=1-2p_1$, $p_1=p_2=\frac{1}{4\sin^2{\frac{\pi n}N}}$, $\forall n$
      such that $0\le p_0\le1$.
  \end{itemize}
  The informational power of $\Pi$ is given by
  \begin{equation}
    W(\Pi) = \sum_{j=0}^{N-1} \left[\frac2N\sin^2\left(\frac{\pi
        j}{N}\right)\right] \log\left[\frac2N\sin^2\left(\frac{\pi
        j}{N}\right)\right] + \log N.
  \end{equation}
\end{thm}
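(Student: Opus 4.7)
The plan is to exploit the reduction provided by Proposition \ref{thm:sasaki} and then verify, by direct computation, that both candidate families satisfy the first-order necessary condition of Proposition \ref{thm:condition1} and yield the claimed value of the mutual information. Parametrize every pure state as $\ket{\psi_i}=\pure{\sin\theta_i}{\cos\theta_i}$; since $\ket{\pi_j}=U^j\ket{0}=\pure{\cos(\pi j/N)}{\sin(\pi j/N)}$, the Born rule gives $\bra{\psi_i}\Pi_j\ket{\psi_i}=\frac{2}{N}\sin^2(\theta_i+\pi j/N)$, which is the central object in what follows.

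The first key step is to show that for both candidates the marginal $q_j=\sum_i p_i\bra{\psi_i}\Pi_j\ket{\psi_i}$ is the uniform distribution $1/N$. For the real $Z_N$-symmetric choice this is immediate from $\sum_{i=0}^{N-1}\sin^2(\pi(i+j)/N)=N/2$. For the Y-shaped ensemble one uses $\sin^2(A+B)+\sin^2(A-B)=1-\cos 2A\cos 2B$ to see that $q_j = \frac{1}{N} - \frac{1}{N}[p_0+2p_1\cos(2\pi n/N)]\cos(2\pi j/N)$, and the bracket vanishes precisely when $p_1=1/(4\sin^2(\pi n/N))$, which pins down the stated weights and the admissibility range $0\le p_0\le 1$.

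With $q_j=1/N$, $\Pi'_i=\sum_j\log[2\sin^2(\theta_i+\pi j/N)]\,\Pi_j$. The mutual information reduces to $I(V,\Pi)=\sum_i p_i\cdot\frac{2}{N}\sum_j\sin^2(\theta_i+\pi j/N)\log[2\sin^2(\theta_i+\pi j/N)]$; because $\sin^2$ is $\pi$-periodic, the inner sum is independent of $\theta_i$, so the $p_i$-average collapses and, after factoring $\log 2\sin^2=\log(\frac{2}{N}\sin^2)+\log N$, yields the stated formula. Verifying the eigenvalue equation $\Pi'_i\ket{\psi_i}=I(V,\Pi)\ket{\psi_i}$ then amounts, after expanding $\Pi_j\ket{\psi_i}$ and using angle-sum identities, to two scalar identities: the vanishing of $\sum_k\log[2\sin^2(\pi k/N)]\sin(2\pi k/N)$ by the antisymmetry $k\mapsto N-k$, and the reproduction of $I(V,\Pi)$ through $\sum_k\frac{2}{N}\sin^2(\pi k/N)\log[2\sin^2(\pi k/N)]$. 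For the $Z_N$-symmetric family this is a single check covering all $i$ by translation; for the Y-shaped family one checks the case $i=0$ and invokes the $y$-axis reflection symmetry $\theta\mapsto-\theta$ of $\Pi$ to cover $i=1,2$.

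The main obstacle is upgrading ``critical point of $I$'' to ``global maximum.'' Proposition \ref{thm:sasaki} restricts the search to real pure ensembles with $M\in\{2,3\}$, but the landscape is nonconvex and admits a whole continuum of Y-shaped stationary ensembles parametrized by $n$; the delicate point is to rule out a competing branch with strictly larger mutual information. I would handle this by showing that every solution of the first-order equations on the reduced parameter space yields the same value computed above (exploiting that $\Pi$ is invariant under the rotation group $Z_N$ and the reflection $\sigma_y\mapsto-\sigma_y$, so any stationary ensemble must be equivalent under these symmetries to one of the two listed families), and by cross-checking against the output of the steepest-ascent algorithm of Sec.~\ref{sec:algorithm} from several initial conditions.
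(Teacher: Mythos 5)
Your computations of the marginals $q_j=1/N$, of the weights $p_1=1/(4\sin^2(\pi n/N))$, and of the value $I(V,\Pi)=\log N+\sum_j\frac2N\sin^2(\pi j/N)\log[\frac2N\sin^2(\pi j/N)]$ are all correct, and the stationarity check via Proposition \ref{thm:condition1} would go through. The genuine gap is exactly where you locate it yourself: the passage from ``critical point'' to ``global maximum.'' Your proposed fix --- classifying all solutions of the first-order equations up to the symmetry group of $\Pi$ and cross-checking with the steepest-ascent algorithm --- is not a proof. The symmetry of $\Pi$ under $Z_N$ and reflection does not by itself force every stationary ensemble to be symmetric (symmetric problems routinely have symmetry-broken critical points), so the claimed classification would need an independent argument that you do not supply; and numerical runs of Algorithm \ref{thm:iteration} from several seeds can at best suggest, never establish, that no better branch exists. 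As written, your argument proves only that the two listed families are stationary with the stated common value.

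The paper closes this gap without any critical-point analysis, by splitting $I(X:Y)=H(Y)-H(Y|X)$ and bounding the two terms \emph{separately and globally}: $H(Y)\le\log N$ for every ensemble, with equality iff $q_j$ is uniform; and $H(Y|X)=-\sum_ip_if(\theta_i)\ge-\max_\theta f(\theta)$, where by Lemma 3 of \cite{SBJOH99} the single-state function $f(\theta)=\sum_jp_{j|i}\log p_{j|i}$ attains its global maximum precisely at $\theta=\pi k/N$. Hence $I(V,\Pi)\le\log N+f(0)$ for \emph{every} ensemble, and the two listed families saturate both bounds simultaneously. If you want to salvage your route, you should replace your last paragraph with this two-sided bound (or import the cited lemma on $\max_\theta f(\theta)$ directly); everything else in your write-up then becomes the verification that the bounds are attained, which you have already done correctly.
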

\begin{proof}
  The conditional probability $p_{j|i}$ of outcome $j$ given the state
  $\ket{\psi_i}$ is $p_{j|i}=\frac2N\sin^2(\theta_i-\frac{\pi j}N)$, and the
  probability $q_j$ of outcome $j$ is $q_j=\sum_{i=0}^{M-1}p_ip_{j|i}$.

  Consider the random variables $X$ and $Y$, with $X$ distributed according to
  $p_i$, and $Y$ such that the conditional probability of outcome $j$ of $Y$
  given outcome $i$ of $X$ is $p_{j|i}$. Clearly $I(X:Y)=I(V,\Pi)$.

  By setting $f(\theta_i)=\sum_{j=0}^{N-1}p_{j|i}\log p_{j|i}$, we have for
  the joint entropy $H(Y|X)=-\sum_{i=0}^{M-1}p_if(\theta_i)$. As shown in
  Lemma 3 of \cite{SBJOH99}, $f(\theta)$ attains its global maximum for
  $\theta=\frac{\pi k}{N}$, $k\in\mathbb{N}$. Thus by choosing $\{\theta_i\}$
  multiples of $\frac{\pi}{N}$, $H(Y|X)$ attains its minimum $H(Y|X)=f(0)$,
  independent of $M$ and $\{p_i\}$.

  By taking the real $Z_N$-symmetric or the real $Y$-shaped parameterizations
  for $M$, $\{\theta_i\}$ and $\{p_i\}$, we have $q_j=\frac1N$, so the entropy
  $H(Y)$ attains its maximum, i. e. $H(Y)=\log N$. Since $I(X:Y)=H(Y)-H(Y|X)$,
  the Proposition remains proved.
\end{proof}

We notice that for a real $Z_N$-symmetric POVM $\Pi=\{\frac2N
\ket{\pi_j}\bra{\pi_j}\}$, any maximally informative ensemble
$V=\{p_i,\ket{\psi_i}\}$ given in Proposition \ref{thm:iprs} is such that
every state $\ket{\psi_i}$ is orthogonal to one of the
$\ket{\pi_j}$. Considering the real $Y$-shaped parameterization, we observe
that if $N$ is even one can chose $n=\frac{N}2$, obtaining
$V=\{1/2,\ket{i}\}$, with $i=0,1$. With this choice, the maximally informative
real $Y$-shaped ensemble is minimal. For some real $Z_N$-symmetric POVMs, the
maximally informative ensembles with minimal number of states are represented
in Fig. \ref{fig:rsensembles}.
\begin{figure}[htb]
  \includegraphics{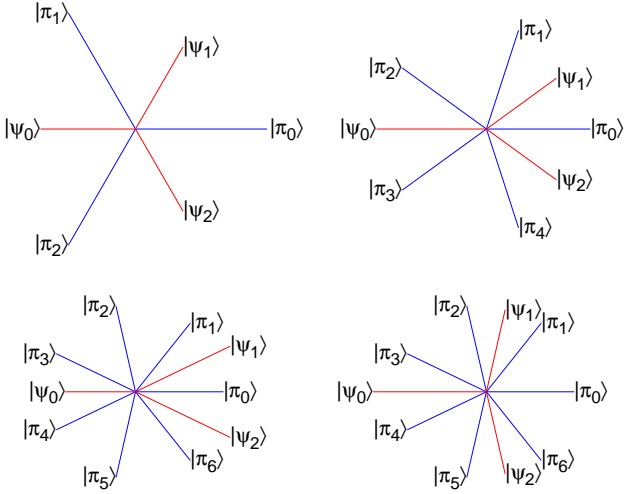}
  \caption{(Color online) Real $Z_N$-symmetric POVMs $\Pi=\{\frac2N
    \ket{\pi_j}\bra{\pi_j}\}_{j=0}^{N-1}$ (blue vectors labeled by
    $\ket{\pi_j}$) for $N=3$ (upper-left), $N=5$ (upper-right), and $N=7$
    (lower-left and lower-right). A maximally informative ensemble (red
    vectors labeled by $\ket{\psi_i}$) $V=\{p_i,\ket{\psi_i}\}_{i=0}^{M-1}$
    with minimal number of states is represented for each POVM. For $N=7$,
    there are two inequivalent maximally informative ensembles (lower-left and
    lower-right). The angle between the states
    $\pure{\cos\theta_0}{\sin\theta_0}$ and
    $\pure{\cos\theta_1}{\sin\theta_1}$ is $2(\theta_1-\theta_0)$, as in the
    Bloch sphere representation. The length of the vector corresponding to
    state $\ket{\psi_i}$ is proportional to $\sqrt{p_i}$.}
  \label{fig:rsensembles}
\end{figure}

The real $Z_3$-symmetric POVM $\Pi$ is usually called {\em trine}
measurement. The informational power of $\Pi$ is $W(\Pi)=\log3/2$ by
Prop. \ref{thm:iprs}. The maximally informative ensemble for $\Pi$
parameterized as in Prop. \ref{thm:iprs} is usually called {\em
  antitrine}. The analogous problem of maximization of the accessible
information for real-symmetric ensembles has been addressed by Holevo
\cite{Hol73} and by Sasaki et al. \cite{SBJOH99}.

\subsection{Mirror-symmetric POVMs}\label{sec:mirrorsymm}

In this subsection we apply the duality shown in Prop. \ref{thm:duality}
between the informational power and the accessible information to
mirror-symmetric POVMs.

\begin{dfn}
  We call mirror-symmetric ensemble any $2$-dimensional real ensemble
  $S=\{p_i,\ket{\psi_i}\}$ such that for any $\ket{\psi_i}$, there exists a
  $\ket{\psi_k}=\sigma_z\ket{\psi_i}$ and $p_i=p_k$.
\end{dfn}
\begin{dfn}
  We call mirror-symmetric POVM any $2$-dimensional real POVM
  $\Lambda=\{\Lambda_j\}$ with $\Lambda_j=n_j\ket{\lambda_j}\bra{\lambda_j}$
  such that for any $\ket{\lambda_j}$, there exists a
  $\ket{\lambda_l}=\sigma_z\ket{\lambda_j}$ and $n_j=n_l$.
\end{dfn}

The problem of accessible information for mirror-symmetric POVMs has been
addressed in \cite{Fre06}. From Definitions \ref{def:dualens} and
\ref{def:dualpovm}, it immediately follows that if the ensemble $S$ is
mirror-symmetric, the POVM $\Pi(S)$ is mirror-symmetric, and that if the POVM
$\Lambda$ is mirror-symmetric, the ensemble $R(\Lambda,\sigma)$ is
mirror-symmetric, for any density matrix $\sigma$.

\begin{thm}
  Given a mirror-symmetric POVM $\Lambda=\{\Lambda_j\}$, there exists a
  maximally informative ensemble $S=\{p_i,\ket{\psi_i}\}_{i=0}^{M-1}$ such
  that $S$ is mirror-symmetric and $M\le4$.
\end{thm}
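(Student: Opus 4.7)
The plan is to apply the duality of Proposition \ref{thm:duality} and to exploit the mirror symmetry of $\Lambda$ at both ends of the resulting dual diagram, reducing the question to finding a mirror-symmetric maximally informative POVM for a mirror-symmetric ensemble, whose cardinality is then controlled by Davies' bound.

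First I would show that the optimization over $\sigma$ in Proposition \ref{thm:duality} can be restricted to $\sigma_z$-diagonal density matrices. Given any maximally informative ensemble $S=\{p_i,\ket{\psi_i}\}$ for $\Lambda$, form the symmetrized ensemble $\tilde S=\{p_i/2,\ket{\psi_i};\, p_i/2,\sigma_z\ket{\psi_i}\}_i$. Mirror symmetry of $\Lambda$ means $\sigma_z\Lambda_j\sigma_z=\Lambda_{l(j)}$ for some involution $l$, and a short rearrangement of the mutual information gives
\begin{equation*}
I(\tilde S,\Lambda)-I(S,\Lambda)=\sum_j q_j\log\frac{q_j}{\tilde q_j}\ge 0,
\end{equation*}
with $q_j=\sum_i p_i\Tr[\ket{\psi_i}\bra{\psi_i}\Lambda_j]$ and $\tilde q_j=(q_j+q_{l(j)})/2$, by positivity of the Kullback--Leibler divergence. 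Hence $\tilde S$ is also maximally informative and its average $\sigma_{\tilde S}=\tfrac12(\sigma_S+\sigma_z\sigma_S\sigma_z)$ is mirror-symmetric.

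Next, by Proposition \ref{thm:duality} the dual POVM $\Pi(\tilde S)$ is maximally informative for $R(\Lambda,\sigma_{\tilde S})$, which is itself a mirror-symmetric ensemble by the observation preceding the statement. I would then invoke the analysis of accessible information for mirror-symmetric qubit ensembles in \cite{Fre06} to select a maximally informative POVM $\Pi^*$ for $R(\Lambda,\sigma_{\tilde S})$ that is itself mirror-symmetric and, in agreement with Davies' bound $M\le D^2=4$, has no more than four rank-one elements.

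Finally, the ensemble $S^*:=R(\Pi^*,\sigma_{\tilde S})$ meets all the requirements. Direct verification gives $\sigma_{S^*}=\sigma_{\tilde S}$ and $\Pi(S^*)=\Pi^*$, so by Proposition \ref{thm:duality} $S^*$ is maximally informative for $\Lambda$. Since $\sigma_{\tilde S}^{1/2}$ commutes with $\sigma_z$ and $\Pi^*$ is mirror-symmetric, $\sigma_z$ permutes the pure states of $S^*$ according to $l$ and preserves their weights, so $S^*$ is mirror-symmetric; and $|S^*|=|\Pi^*|\le 4$. The main obstacle is the selection step just above: Davies' theorem by itself yields $M\le 4$ without any symmetry guarantee, and a naive symmetrization of the optimal POVM would double its cardinality, so the \emph{simultaneous} bound with mirror symmetry depends on the structural results of \cite{Fre06}.
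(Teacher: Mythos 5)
Your proposal is correct and follows the same overall route as the paper: reduce via the duality of Prop.~\ref{thm:duality} to finding a maximally informative POVM for the mirror-symmetric ensemble $R(\Lambda,\sigma)$, and then invoke Prop.~2 of \cite{Fre06} to obtain one that is simultaneously mirror-symmetric and has at most four rank-one elements (you are right that Davies alone gives the cardinality but not the symmetry). The genuine difference is that you supply a step the paper skips: the paper simply asserts that $R(\Lambda,\sigma)$ is mirror-symmetric ``for any density matrix $\sigma$,'' but this is only true when $\sigma$ commutes with $\sigma_z$, since otherwise $\sigma^{1/2}\sigma_z\ket{\lambda_j}$ need not be proportional to $\sigma_z\sigma^{1/2}\ket{\lambda_j}$ and the weights $\Tr[\sigma\Lambda_j]$, $\Tr[\sigma\Lambda_{l(j)}]$ need not match. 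Your symmetrization argument closes exactly this gap: since $\sigma_z\Lambda_j\sigma_z=\Lambda_{l(j)}$ leaves the conditional entropy $H(Y|X)$ of the doubled ensemble unchanged while the change in output entropy is the Kullback--Leibler divergence $\sum_j q_j\log(q_j/\tilde q_j)\ge 0$, an optimal $\sigma$ commuting with $\sigma_z$ always exists, and only then is $R(\Lambda,\sigma)$ legitimately mirror-symmetric so that \cite{Fre06} applies. The final verification that $S^*=R(\Pi^*,\sigma_{\tilde S})$ satisfies $\sigma_{S^*}=\sigma_{\tilde S}$ and $\Pi(S^*)=\Pi^*$, hence is maximally informative and inherits the mirror symmetry, is also correct. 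In short, your argument is the paper's argument made airtight; nothing in it is wrong, and the extra lemma is genuinely needed.
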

\begin{proof}
  By Prop. \ref{thm:duality}, $S^*$ is maximally informative for $\Lambda$ if
  and only if $\sigma_{S^*}=\arg\max_\sigma A(R(\Lambda,\sigma))$ and
  $\Pi(S^*)$ is maximally informative for $R(\Lambda, \sigma_{S^*})$. By
  Prop. 2 in \cite{Fre06}, there exists a maximally informative
  mirror-symmetric four-element POVM $\Pi(S^*)$, so the statement follows.
\end{proof}

As an application we consider the mirror-symmetric POVM
$\Pi=\{n_j\ket{\pi_j}\bra{\pi_j}\}_{j=0}^2$, with
\begin{equation}\label{eq:y}
  \begin{array}{lll}
    \ket{\pi_0} = \pure{1}{0}, & 
    \ket{\pi_1} = \pure{\sin\theta}{\cos\theta}, &
    \ket{\pi_2} = \pure{\sin\theta}{-\cos\theta},
  \end{array}
\end{equation}
and $n_0=\frac{\cos2\theta}{\cos^2\theta}$ and
$n_1=n_2=\frac{1}{2\cos^2\theta}$.  Figure \ref{fig:Y} shows the informational
power $W(\Pi)$ as a function of $\theta$, as obtained by Algorithm
\ref{thm:iteration}.
\begin{figure}[htb]
  \includegraphics{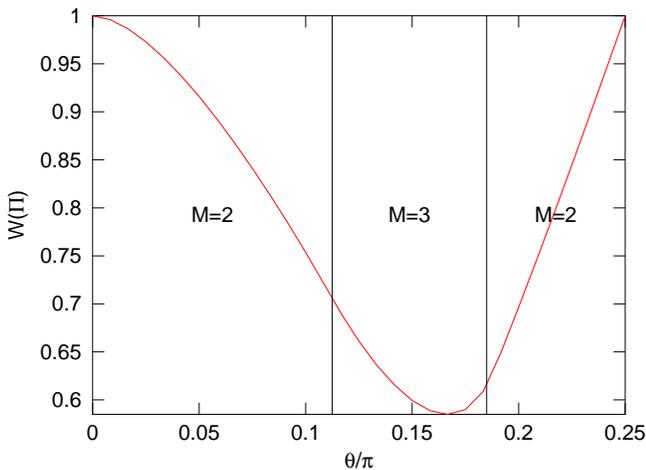}
  \caption{(Color online) Informational power of a mirror-symmetric POVM
    $\Pi=\{n_j\ket{\pi_j}\bra{\pi_j}\}_{j=0}^2$, with $\ket{\pi_j}$ as in
    Eq. (\ref{eq:y}), as a function of $\theta$. The minimum is attained for
    $\theta=\pi/6$, where $\Pi$ corresponds to the trine POVM and a maximally
    informative ensemble is the antitrine. The maxima at $\theta=0$ and
    $\theta=\pi/4$ correspond to the degenerate case of the POVM projecting on
    orthonormal basis. The label $M=2,3$ denotes the minimum number $M$ of
    states of a maximally informative ensemble in each of the three regions.}
  \label{fig:Y}
\end{figure}

\subsection{$2$-dimensional SIC POVM}\label{sec:tetrahedron}

According to \cite{RBSC04, AFF10}, we provide the following definition.
\begin{dfn}
  A $D$-dimensional POVM $\Pi=\{\Pi_j\}_{j=0}^{N-1}$ with $N=D^2$ elements
  $\Pi_j = \frac{1}{D} \ket{\pi_j}\bra{\pi_j}$ with invariant inner product
  $\Tr[\Pi_j\Pi_l] = \left(D^2(D+1)\right)^{-1}$, for any $i \ne j$, is called
  symmetric informationally complete (SIC) POVM.
\end{dfn}

For $D=2$ there exists only one SIC POVM $\Pi=\{\frac12
\ket{\pi_j}\bra{\pi_j}\}_{j=0}^3$ with
\begin{equation}\label{eq:tetra}
  \begin{array}{ll}
    \ket{\pi_0} = \pure{1}{0}, & \ket{\pi_1} =
    \pure{\frac{1}{\sqrt3}}{\sqrt{\frac23}},\\
    \ket{\pi_2} = \pure{\frac{1}{\sqrt3}}{e^{i\frac23\pi}\sqrt{\frac23}}, &
    \ket{\pi_3} = \pure{\frac{1}{\sqrt3}}{e^{i\frac43\pi}\sqrt{\frac23}}.
  \end{array}
\end{equation}
Since these states lie on the four vertex of a tetrahedron, this POVM is
usually called the {\em tetrahedron}.

\begin{thm}\label{thm:tetra}
  Given the $2$-dimensional SIC POVM
  $\Pi=\{\frac12\ket{\pi_j}\bra{\pi_j}\}_{j=0}^3$, the ensemble $V=\{\frac14,
  \ket{\psi_i}\}_{i=0}^3$ with
  \begin{equation}\label{eq:antitetra}
    \begin{array}{ll}
      \ket{\psi_0} = \pure{0}{1}, & \ket{\psi_1} =
      \pure{\sqrt{\frac23}}{-\frac{1}{\sqrt3}}\\ \ket{\psi_2} =
      \pure{\sqrt{\frac23}}{e^{i\frac13\pi}\frac{1}{\sqrt3}}, & \ket{\psi_3} =
      \pure{\sqrt{\frac{2}{3}}} {e^{i\frac53\pi}\frac{1}{\sqrt{3}}}.
    \end{array}
  \end{equation}
  is maximally informative, and the informational power is $W(\Pi) =
  \log\frac{4}{3}$.
\end{thm}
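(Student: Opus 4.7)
My plan is to establish Proposition~\ref{thm:tetra} by combining a direct computation of $I(V,\Pi)$ with the necessary optimality condition of Proposition~\ref{thm:condition1}, and then invoking the duality of Proposition~\ref{thm:duality} to certify global optimality.

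First, I would compute all inner products $|\braket{\pi_j}{\psi_i}|^2$ from Eqs.~(\ref{eq:tetra}) and (\ref{eq:antitetra}). A direct calculation shows that for each $i$ there is a unique index $\tau(i)$ with $\braket{\pi_{\tau(i)}}{\psi_i}=0$, while $|\braket{\pi_j}{\psi_i}|^2=2/3$ for the remaining three values of $j$. Hence the Born probabilities $p_{j|i}:=\bra{\psi_i}\Pi_j\ket{\psi_i}=\tfrac12|\braket{\pi_j}{\psi_i}|^2$ take the value $0$ when $j=\tau(i)$ and $1/3$ otherwise, and the marginals are $q_j=\sum_i\tfrac14 p_{j|i}=\tfrac14$. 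The mutual information is then
\[
I(V,\Pi)=\sum_{i,j:\,p_{j|i}=1/3}\tfrac14\cdot\tfrac13\,\log\tfrac{1/3}{1/4}=\log\tfrac{4}{3}.
\]

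Next, I would verify the critical-point condition $\Pi'_i\ket{\psi_i}=I(V,\Pi)\ket{\psi_i}$ of Proposition~\ref{thm:condition1}. With $\Pi'_i=\sum_j\log(p_{j|i}/q_j)\,\Pi_j$, the formally singular coefficient $\log 0$ occurs only for $j=\tau(i)$, but that term acts on $\ket{\psi_i}$ through $\Pi_{\tau(i)}\ket{\psi_i}=\tfrac12\ket{\pi_{\tau(i)}}\braket{\pi_{\tau(i)}}{\psi_i}=0$ and therefore vanishes. The remaining three terms carry coefficient $\log(4/3)$ and, using $\sum_j\Pi_j=\mathbb{1}$, combine to give $\log(4/3)\,(\mathbb{1}-\Pi_{\tau(i)})\ket{\psi_i}=\log(4/3)\ket{\psi_i}$, so the condition is satisfied with eigenvalue $\log(4/3)=I(V,\Pi)$.

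Finally, to promote this critical point to a global maximum I would invoke Proposition~\ref{thm:duality}, which gives $W(\Pi)=\max_\sigma A(R(\Pi,\sigma))$. By the tetrahedral symmetry of the antitetrahedron, $\sigma_V=\sum_i\tfrac14\ket{\psi_i}\bra{\psi_i}=\mathbb{1}/2$ and $\Pi(V)=\{\tfrac12\ket{\psi_i}\bra{\psi_i}\}$, whereas $R(\Pi,\mathbb{1}/2)=\{\tfrac14,\ket{\pi_j}\bra{\pi_j}\}$ is the uniform tetrahedron ensemble, whose accessible information is the classical value $\log(4/3)$ attained by the antitetrahedron POVM (Davies). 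Optimality of $\sigma=\mathbb{1}/2$ follows from the unitary symmetry group of the SIC POVM acting transitively on $\{\ket{\pi_j}\}$ and admitting only $\mathbb{1}/2$ as an invariant density matrix. The main obstacle is turning this symmetry into a rigorous statement, since $\sigma\mapsto R(\Pi,\sigma)$ is nonlinear in $\sigma$ through the square root; if that symmetrization becomes delicate, a back-up is to run Algorithm~\ref{thm:iteration} from several random initial ensembles and observe convergence to the value $\log(4/3)$, which together with the verified critical-point condition yields a computer-assisted certification of global maximality.
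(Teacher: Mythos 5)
Your achievability computation is correct ($p_{j|i}=0$ for the one orthogonal direction and $1/3$ otherwise, uniform marginals, hence $I(V,\Pi)=\log\frac43$), and the verification of the stationarity condition of Proposition~\ref{thm:condition1} is a sound consistency check. But neither step proves the converse bound $W(\Pi)\le\log\frac43$, which is the actual content of the proposition: Proposition~\ref{thm:condition1} is only a necessary condition, and the mutual-information landscape generically has several critical points. Your attempt to close the gap via Proposition~\ref{thm:duality} founders exactly where you say it does. Invariance of $\sigma\mapsto A(R(\Pi,\sigma))$ under the tetrahedral symmetry group only shows that the \emph{set} of maximizers is invariant as a set; it does not force the maximizer to be the unique invariant state $\mathbb{1}/2$ unless you also know the function is concave in $\sigma$ (so that averaging over the group cannot decrease it), and concavity is not established --- the dependence on $\sigma$ through $\sigma^{1/2}\Lambda_j\sigma^{1/2}$ obstructs the obvious convexity arguments. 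The numerical back-up via Algorithm~\ref{thm:iteration} certifies nothing about global maximality, since a steepest-ascent method can converge to local maxima. As it stands, the proposal proves $W(\Pi)\ge\log\frac43$ only.

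The paper closes the converse by a different and entirely elementary route, which you should adopt. For an arbitrary pure-state ensemble (pure states suffice by Proposition~\ref{thm:pures}), write $I(V,\Pi)=H(Y)-H(Y|X)$ with $Y$ the outcome variable. Then $H(Y)\le\log 4$ trivially, and $H(Y|X)=\sum_i p_i H(Y|X=i)\ge\min_{\theta,\phi}\bigl(-\sum_j p_{j|i}\log p_{j|i}\bigr)=\log 3$, because the single-state output entropy over the Bloch sphere attains its global minimum $\log 3$ precisely at the four antitetrahedron directions (a two-parameter optimization one checks directly). Hence $I(V,\Pi)\le\log 4-\log 3$ for \emph{every} ensemble, and the ensemble of Eq.~(\ref{eq:antitetra}) saturates both bounds simultaneously. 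This decoupling of $H(Y)$ and $H(Y|X)$ is the idea your proposal is missing; with it, the duality and Davies' theorem are not needed at all for this statement.
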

\begin{proof}
  Consider an ensemble $V=\{p_i,\ket{\psi_i}\}$ parameterized as
  $\ket{\psi_i}=\pure{\sin\theta_i}{e^{i\phi_i}\cos\theta_i}$. Call
  $p_{j|i}=|\braket{\psi_i}{\pi_j}|^2$ the conditional probability of outcome
  $j$ given the state $\ket{\psi_i}$, and $q_j=\sum_{i=0}^3p_ip_{j|i}$ the
  probability of outcome $j$.

  Consider the random variables $X$ and $Y$, with $X$ distributed according to
  $p_i$, and $Y$ such that the conditional probability of outcome $j$ of $Y$
  given outcome $i$ of $X$ is $p_{j|i}$. Clearly, $I(X:Y)=I(V,\Pi)$.

  By setting $f(\theta_i,\phi_i)=\sum_{j=0}^{N-1}p_{j|i}\log p_{j|i}$, we have
  for the joint entropy $H(Y|X)=-\sum_{i=0}^{M-1}p_if(\theta_i,\phi_i)$. As it
  is easy to show, $f(\theta,\phi)$ attains its global maximum $\log3$ at
  $\theta=0$ for any $\phi$, and at $\theta= \arccos(\frac1{\sqrt{3}})$ for
  $\phi=\frac\pi3$, $\phi=\pi$, and $\phi=\frac{5\pi}3$. Thus making one of
  these choices for $\{\theta_i,\phi_i\}$, $H(Y|X)$ attains its minimum
  $H(Y|X)=\log3$.

  Moreover, by setting $M=4$ and $p_i=1/4$, we have $q_j=\frac14$, so the
  entropy $H(Y)$ attains its maximum, i. e. $H(Y)=\log 4$. Since
  $I(X:Y)=H(Y)-H(Y|X)$, the Proposition remains proved.
\end{proof}

We notice that for the $2$-dimensional SIC POVM
$\Pi=\{\frac12\ket{\pi_j}\bra{\pi_j}\}_{j=0}^3$, the maximally informative
ensemble $V=\{\frac14, \ket{\psi_i}\}_{i=0}^3$ in Prop. \ref{thm:tetra} is
such that any state $\ket{\psi_i}$ is orthogonal to one state $\ket{\pi_j}$.
Since the states of $V$ lie on the vertexes of a tetrahedron, this ensemble is
usually called {\em antitetrahedron}. The accessible information of the
ensemble which enjoys the same symmetry as $\Pi$ has been proven in
\cite{Dav78} to be $\log4/3$. We want to comment that generally SIC POVMs have
low informational power, as it happens for overcomplete measurements: for
informational completeness one must pay the price of low informational power.

\section{Conclusions}\label{sec:conclusions}

In this work we introduced the informational power of a quantum measurement as
the maximum amount of classical information that the POVM can extract from any
ensemble of states. We showed that it is the classical capacity of a
quantum-classical channel and proved additivity. We restated the problem of
maximizing the informational power of a POVM as the problem of maximizing the
accessible information of a suitable ensemble, and provided a bound on the
minimal number of states of a maximally informative ensemble. Then we provided
a numerical algorithm to find a maximally informative ensemble for a given
POVM. Finally, we classified some POVMs according to their informational
power, namely POVMs with commuting elements, real-symmetric and
mirror-symmetric POVMs.

The presented results have obvious practical relevance in several contexts,
such as the communication of classical information over quantum channels and
the storage and retrieval of information from quantum memories.

{\it Note added in the proof.} After the submission of this work, two related
manuscripts appeared on arXiv \cite{Ore11,Hol11}. In particular, Holevo
\cite{Hol11} studied the informational power in the relevant
infinite-dimensional case.

\section*{Acknowledgments}

We thank Francesco Buscemi, Michael Hall, and Jon Tyson for useful
suggestions. This work was supported by the Italian Ministry of Education
through PRIN 2008 and the European Community through the COQUIT and CORNER
projects.

\end{document}